\documentclass[submission,copyright,creativecommons]{eptcs}
\providecommand{\event}{AFL 2026} % Name of the event you are submitting to

\usepackage{iftex}

\ifpdf
  \usepackage{underscore}         % Only needed if you use pdflatex.
  \usepackage[T1]{fontenc}        % Recommended with pdflatex
\else
  \usepackage{breakurl}           % Not needed if you use pdflatex only.
\fi

\title{The Value Generating Power of Weighted Tree Automata \\ with Initial Algebra Semantics}
\author{Manfred Droste
\institute{Leipzig University, Germany}
\and
Zolt\'an F\"ul\"op\footnote{Project no TKP2021-NVA-09 has been implemented with the support provided by the Ministry of Culture and Innovation of Hungary from the National Research, Development and Innovation Fund, financed under the TKP2021-NVA funding scheme.} 
\institute{University of Szeged, Hungary}
\and
Andreja Tepav\v{c}evi\'c 
\institute{Mathematical Institute SANU, Belgrade \\ University of Novi Sad, Serbia}
\and
Heiko Vogler
\institute{Technische Universit\"at Dresden, Germany}
}
\def\titlerunning{The Value Generating Power of Weighted Tree Automata}
\def\authorrunning{M. Droste,  Z. F\"ul\"op, A. Tepav\v{c}evi\'c, and H. Vogler}

\usepackage{amsmath,amssymb,dsfont,paralist,amsthm,eurosym}
\usepackage[format=default,labelfont=bf]{caption}
\usepackage{placeins}
\usepackage{microtype}
\usepackage[utf8]{inputenc}
\usepackage{scalerel}
\usepackage{stmaryrd}
\usepackage{bbold}
\usepackage[pagewise]{lineno}
\usepackage{algorithm}
\usepackage{algpseudocode}

\algnewcommand\algorithmicforeach{\textbf{for each}}
\algnewcommand\algorithmicvariables{\textbf{Variables:}}
\algnewcommand\Variables{\item[\algorithmicvariables]}
\algdef{S}[FOR]{ForEach}[1]{\algorithmicforeach\ #1\ \algorithmicdo}
\algrenewcommand\algorithmicrequire{\textbf{Input:}}
\algrenewcommand\algorithmicensure{\textbf{Output:}}
\algnewcommand\Fixedcomment[1]{\hfill\makebox[0.4\textwidth][l]{$\triangleright$ #1}}

\usepackage{graphicx}
\usepackage{tikz}
\usepackage{subcaption}
\usepackage{tikz-qtree}
\usepackage{rotating}

\usetikzlibrary{arrows,matrix}
\usetikzlibrary{matrix,calc}
\usetikzlibrary{shapes.geometric}
\usetikzlibrary{positioning}
\usetikzlibrary{calligraphy}
\usetikzlibrary{fit}

\pgfdeclarelayer{background}
\pgfsetlayers{background,main}

\usetikzlibrary{decorations.pathreplacing}
\tikzset{curlybrace/.style={decoration=brace,decorate}}
\usetikzlibrary{automata}
\usetikzlibrary{calc,positioning}
\usetikzlibrary{decorations.pathmorphing}
\usetikzlibrary{shapes}
\tikzset{trinode/.style={draw,triangle,minimum width=2.0cm}}
\tikzset{snake/.style={decorate,decoration=snake}}
\tikzset{curlybrace/.style={decoration=brace,decorate}}
\tikzset{triangle/.style={regular polygon,regular polygon sides=3}}
\tikzset{edge from parent path={(\tikzparentnode) -- (\tikzchildnode.north)}}
\usetikzlibrary{arrows.meta}

\usepackage{wrapfig}
\usepackage[rightcaption]{sidecap}
\usepackage[colorinlistoftodos]{todonotes}
\usepackage{mathtools}
\usepackage{enumitem}
\usepackage{multicol} 
\usepackage{todonotes}
\usepackage{imakeidx}         % allows index generation

\usepackage{hyperref}

\newtheorem{theorem}{Theorem}[section]
\newtheorem{main-theorem}{Main Theorem}
\newtheorem{lemma}[theorem]{Lemma}

\newtheorem{corollary}[theorem]{Corollary}
\newtheorem{observation}[theorem]{Observation}
\newtheorem{example}[theorem]{Example}
\newtheorem{definition}[theorem]{Definition}

\newcommand{\cA}{\mathcal A}
\newcommand{\cB}{\mathcal B}

\newcommand{\rmM}{\mathrm{M}}

\newcommand{\C}{\mathrm{C}}

\newcommand{\T}{\mathrm{T}}

\newcommand{\R}{\mathrm{R}}

\newcommand{\im}{\mathrm{im}}

\newcommand{\wt}{\mathrm{wt}}

\newcommand{\e}{\mathrm{e}} % for elementary context

\newcommand{\sfM}{\mathsf{M}}
\newcommand{\sfA}{\mathsf{A}}
\newcommand{\sfB}{\mathsf{B}}

\newcommand{\A}{\mathsf{A}}
\newcommand{\B}{\mathsf{B}}

\newcommand{\M}{\mathsf{M}}

\newcommand{\sfT}{\mathsf{T}}

\newcommand{\V}{\mathsf{V}}

\newcommand{\0}{\mathbb{0}}
\newcommand{\1}{\mathbb{1}}

\newcommand{\h}{\mathrm{h}}

\newcommand{\pos}{\mathrm{pos}}

\newcommand{\rk}{\mathrm{rk}}

\newcommand{\runsem}[1]{[\![#1]\!]^{\mathrm{run}}}
\newcommand{\initialsem}[1]{[\![#1]\!]^{\mathrm{init}}}

\newcommand{\id}{\mathrm{id}}

\newcommand{\rmST}{{\mathrm{ST}_\Sigma}}
\newcommand{\sfST}{\mathsf{ST}}
\newcommand{\rmS}{\mathrm{S}}
\newcommand{\sfS}{\mathsf{S}}

\pgfdeclaredecoration{dashsoliddouble}{initial}{
  \state{initial}[width=\pgfdecoratedinputsegmentlength]{
    \pgfmathsetlengthmacro\lw{.7pt+.5\pgflinewidth}
    \begin{pgfscope}
      \pgfpathmoveto{\pgfpoint{0pt}{\lw}}%
      \pgfpathlineto{\pgfpoint{\pgfdecoratedinputsegmentlength}{\lw}}%
      \pgfmathtruncatemacro\dashnum{%
        round((\pgfdecoratedinputsegmentlength-3pt)/6pt)
      }
      \pgfmathsetmacro\dashscale{%
        \pgfdecoratedinputsegmentlength/(\dashnum*6pt + 3pt)
      }
      \pgfmathsetlengthmacro\dashunit{3pt*\dashscale}
      \pgfsetdash{{\dashunit}{\dashunit}}{0pt}
      \pgfusepath{stroke}
      \pgfsetdash{}{0pt}
      \pgfpathmoveto{\pgfpoint{0pt}{-\lw}}%
      \pgfpathlineto{\pgfpoint{\pgfdecoratedinputsegmentlength}{-\lw}}%     
      \pgfusepath{stroke}
    \end{pgfscope}
  }
}

\tikzset{small circle/.style={circle, draw=black, inner sep=0pt,outer sep=0pt, minimum size=2.5pt}}

\begin{document}
\maketitle

\begin{abstract}
We consider the generating power of the initial algebra semantics of weighted tree automata over strong bimonoids (hence also over semirings) and the question under which conditions the weighted tree automata can produce only finitely many values. 
We show that there exists a right-distributive strong bimonoid which is bi-locally finite but not locally finite. We also show that if the ranked alphabet contains a symbol with rank at least two,
then for any finitely generated strong bimonoid, weighted tree automata
can generate, via their initial algebra semantics, all elements of the strong bimonoid.
As a consequence of these results, for bi-locally finite right-distributive strong bimonoids which are not locally finite,
weighted tree automata can generate infinitely many values, provided that the input ranked alphabet
contains a symbol with rank at least two. This is in sharp contrast to the setting of weighted string automata,
which can generate only finitely many values.
As a further consequence, for any finitely generated semiring, there exists
a weighted tree automaton which generates, via its run semantics, all elements of  the semiring.
\end{abstract}

\section{Introduction}

Trees are a fundamental data structure in theoretical computer science,
naturally modeling hierarchical information such as syntax in programming languages,
structured data formats, and derivation trees in logic and linguistics.
Correspondingly, tree automata provide a robust and well-studied framework for recognizing and processing tree-structured inputs,
extending classical automata theory from strings to branching objects.
In many applications, one also needs to consider quantitative information such as costs, probabilities, or multiplicities.
This has led to the development of weighted tree automata,
which combine the structural capabilities of tree automata with algebraic weight domains, typically given by semirings.
Weighted tree automata have found applications in areas such as natural language processing and program analysis,
cf., e.g., \cite{knigra05,rephorsag95,chi05}; the rich theory of weighted automata, started with \cite{sch61}, is witnessed by the books and surveys \cite{eil74,salsoi78,wec78,kuisal86,berreu88,sak09,drokuivog09,drokus21,fulvog26}.

A crucial property for weighted string and tree automata over semirings is that two descriptions of their behavior,
the \emph{run semantics} and the \emph{initial algebra semantics},
coincide. This enables the application of methods from algebra for the solution of combinatorial
and language-theoretic problems of weighted automata and even of classical unweighted automata.
More recently, strong bimonoids have been investigated as weight structures for such automata \cite{drostuvog10,drovog10,rad10,cirdroignvog10,drofulkosvog21}.
Strong bimonoids are structures  $\B = (B,\oplus,\otimes,\0,\1)$ where  $(B,\oplus,\0)$  is a commutative monoid,
$(B,\otimes,\1)$  is a monoid, and  $\0$  acts like a multiplicative zero.
Here, we will investigate weighted automata over \emph{right-distributive} strong bimonoids;
these can be viewed as semirings without requiring left-distributivity. Typical examples include, e.g., the standard real-valued polynomials with usual addition and with composition
as multiplication operation (which is right-, but not left-distributive).
It has been shown that the important coincidence of run and initial algebra semantics
extends to weighted string automata over right-distributive strong bimonoids \cite{drostuvog10}. 
For many further results and examples for right-distributive strong bimonoids, see \cite{fulvog26}.

A central topic in the theory of weighted automata is the description of their generating power 
for outputs, including conditions under which weighted automata produce only finitely many values.
If the right-distributive strong bimonoid  $\B$  is \emph{bi-locally finite}, i.e., both monoids  $(B,\oplus,\0)$
and  $(B,\otimes,\1)$  are locally finite, then the run semantics, and hence also its initial algebra semantics,
of any weighted string automaton clearly is finite-valued.

It is the goal of this paper to investigate whether analogous results hold true for weighted tree automata.
It is easy to see that if  $\B$  is bi-locally finite, the run semantics of a weighted tree automaton is finite-valued.
But, it is known that for right-distributive strong bimonoids  $\B$, the run semantics and the initial algebra semantics
of weighted tree automata, in contrast to weighted string automata, may differ (cf. e.g. \cite{rad10,fulvog26}).
Hence for bi-locally finite right-distributive strong bimonoids the question arises
whether the initial algebra semantics of a weighted tree automaton is also finite-valued.
If  $\B$ is \emph{locally finite}, i.e., each finitely generated strong submonoid of  $\B$ is finite,
then clearly the initial algebra semantics of any weighted tree automaton is finite-valued.
Also, if  $\B$  is locally finite, trivially  $\B$  is bi-locally finite.
Hence an important question is whether in general for right-distributive strong bimonoids 
local finiteness and bi-local finiteness are equivalent.
Our first main result will be:

\begin{main-theorem}\label{thm:main-result-on-B}\rm  There exists a bi-locally finite right-distributive strong bimonoid  $\B$  which is not locally finite.
\end{main-theorem}

We note that right-distributive strong bimonoids are near-semirings,
which have been investigated in algebra, cf. \cite{hooroo67,pil83};
hence Main Theorem \ref{thm:main-result-on-B} also shows that there are bi-locally finite near-semirings which are not locally finite.
We will prove a slightly stronger version of Main Theorem \ref{thm:main-result-on-B} in Theorem \ref{thm:M-is-what-we-want}.
In our construction, we start with the free algebra of terms, with addition and multiplication as operations,
over a set  $X$.  By a natural congruence, the quotient of the term algebra
becomes the free right-distributive strong bimonoid over  $X$.
We construct two suitable further congruence conditions 
which ensure that the resulting quotient strong bimonoid  $\B$
is bi-locally finite, but, using the freeness of the construction, we can show that
$\B$  is not locally finite.

Now we turn to the question whether for all weighted tree automata  $\cA$  over a bi-locally finite right-distributive strong bimonoid,
the image $\im(\initialsem{\cA})$  of its initial algebra semantics is finite.
Our second main result will show (together with the above first main result)
that this statement fails drastically as soon as the ranked alphabet contains a symbol of rank at least two.
In fact, for any finitely generated strong bimonoid  $\B$  there exist weighted tree automata 
over such a ranked alphabet and  $\B$  which have the universality property:

\begin{main-theorem}\rm \label{thm:main-result-on-Sigma} Let  $\Sigma$  be an arbitrary ranked alphabet containing a symbol with rank at least 2, i.e., $\Sigma \ne \Sigma^{(0)} \cup \Sigma^{(1)}$, and let  $\B$  be any finitely generated strong bimonoid.
Then there exists a weighted tree automaton $\cA$  over  $\Sigma$ and  $\B$  such that $\im(\initialsem{\cA}) = B$.
\end{main-theorem}

Note that the values of $\im(\initialsem{\cA})$  are given by iterated sums and products of the
finitely many weights occurring in the weighted tree automaton $\cA$.
The constructed weighted tree automaton  $\cA$  produces, at proper branching points,
also \emph{products of sums of weights} occurring in  $\cA$
and, continuing in this way, generates \emph{all} values of the underlying strong bimonoid  $\B$
(provided it satisfies the necessary assumption of being finitely generated).
In particular, trivially, if  $\B$  is infinite, the image  $\im(\initialsem{\cA})$  is also infinite.
By Main Theorem~\ref{thm:main-result-on-B}, this may also happen if  $\B$  is bi-locally finite and right-distributive, answering the question stated above and
showing that weighted tree automata over ranked alphabets containing a symbol with rank a least 2
have much larger generating power than weighted string automata. Main Theorem~\ref{thm:main-result-on-Sigma} will be proved as Theorem \ref{lm:closure-of-finite-set-i-recognizable-stronger}.

To the best of our knowledge, Main Theorem~\ref{thm:main-result-on-Sigma} is new also for the case of semirings.
As an immediate consequence of Main Theorem~\ref{thm:main-result-on-Sigma} and the coincidence of the run semantics and the initial algebra semantics for semirings, we obtain that for each ranked alphabet
$\Sigma$  containing a symbol with rank at least two and for each finitely generated semiring  $\sfB$,
there exists a weighted tree automaton  $\cA$  over  $\Sigma$  and  $\sfB$  such that  $\im(\runsem{\cA}) = B$.
Note that we obtained this result as a consequence of the construction for the initial algebra semantics.
It would be interesting to see whether a direct proof for the run semantics is possible.

Right-distributive strong bimonoids provide very natural generalizations of semirings.
As mentioned, they are particular near-semirings studied in algebra.
Polynomials and endomorphisms of monoids with pointwise addition and composition as multiplication
form natural examples of near-semirings and right-distributive strong bimonoids.
The usual multiplication of ordinal numbers is left- but not right-distributive over addition.
In computer science, in compositional semantics and interprocedural dataflow analysis, 
trees represent structured computations where transfer functions are combined across branches 
and composed along evaluation paths, with joins aggregating alternative behaviors. 
This induces a non-distributive algebra of transformations with join as addition and composition as multiplication, 
naturally captured by weighted tree automata over near-semirings and right-distributive strong bimonoids, cf. \cite{shapnu81,rephorsag95}.

Open research questions are presented and discussed in Section \ref{sect:further-research}. A preliminary version of this paper is published as \cite{drofultepvog24}.

\section{Preliminaries}
\label{sec:preliminaries}

\noindent {\bf General.} We denote by $\mathbb{N}$  the set of nonnegative integers and $\mathbb{N}_+=\mathbb{N}\setminus\{0\}$. For every $k,n \in \mathbb{N}$, we denote by $[k,n]$ the set $\{i\in \mathbb{N} \mid k \le i \le n\}$ and we abbreviate $[1,n]$ by $[n]$. Hence $[0] = \emptyset$.

Let $A$ be a set.  We denote by $A^*$ the set of all finite sequences over $A$, and by $\varepsilon$ the empty sequence. 
Let $\rho \subseteq A\times A$ be a binary relation on $A$. As usual, $\rho^*$ denotes the reflexive and transitive closure of $\rho$.   If $\rho$ is
an equivalence relation, then for each $a\in A$, the \emph{equivalence class of $a$ (modulo $\rho$)}, denoted by $[a]_\rho$, is the set $\{b\in A \mid a\rho b\}$. For each $B\subseteq A$, we put $B/_\rho =\{[a]_\rho \mid a\in B\}$.

\noindent {\bf Ranked alphabets and terms.}    A {\em ranked alphabet} is a pair $(\Sigma,\rk)$, where $\Sigma$ is a non-empty and finite set and $\rk: \Sigma \rightarrow \mathbb{N}$ is a mapping, called \emph{rank  mapping},  such that $\rk^{-1}(0)\not= \emptyset$. For each $k\in \mathbb{N}$, we denote the set $\rk^{-1}(k)$ by $\Sigma^{(k)}$. Sometimes we write $\sigma^{(k)}$ to indicate that $\sigma \in \Sigma^{(k)}$. Moreover, we abbreviate $(\Sigma,\rk)$ by $\Sigma$ and assume that the rank mapping is known or irrelevant.
We call $\Sigma$ a \emph{string ranked alphabet} if $\Sigma=\Sigma^{(0)} \cup \Sigma^{(1)}$ with $|\Sigma^{(0)}|=1$ and $\Sigma^{(1)}\not= \emptyset$.
Let $\Sigma$ be a ranked alphabet and $X$ be a set disjoint with $\Sigma$.
The set of \emph{$\Sigma$-terms (or $\Sigma$-trees) over $X$, } denoted by $\T_\Sigma(X)$, is the smallest set $T$ such that (i) $\Sigma^{(0)} \cup X \subseteq T$ and (ii) for every $k \in \mathbb{N}_+$, $\sigma \in \Sigma^{(k)}$, and $t_1,\ldots, t_k \in T$, we have $\sigma(t_1,\ldots,t_k) \in T$. We write $\T_\Sigma$ for $\T_\Sigma(\emptyset)$.

\noindent{\bf Universal algebra.} 
  We recall some concepts and results from universal algebra (cf. \cite{bursan81,wec92,baanip98}).
In universal algebra, a ranked alphabet is called a signature. Each letter of the ranked alphabet is an operation symbol of the same arity. Nullary letters (leaves) are nullary operations (or constants). 

Let $\Sigma$ be a ranked alphabet (signature). An \emph{algebra $\sf A$ of type $\Sigma$} ($\Sigma$-algebra) is a pair $\sfA=(A,\theta)$ where $A$ is a non-empty set and $\theta$ is a mapping from $\Sigma$ to the family of finitary operations on $A$ such that for every $k\in \mathbb{N}$ and  $\sigma \in \Sigma^{(k)}$, the arity of the operation $\theta(\sigma)$ is $k$. In particular, for $k=0$, a nullary operation is a mapping of type $A^0 \to A$ where $A^0=\{()\}$. 

Let $X$ be a set. The \emph{$\Sigma$-term algebra over $X$}, denoted by $\sfT_\Sigma(X)$,  is the $\Sigma$-algebra
\(\sfT_\Sigma(X) = (\T_\Sigma(X),\theta_\Sigma)\) where, for every $k \in \mathbb{N}$, $\sigma \in \Sigma^{(k)}$, and $t_1,\ldots, t_k \in \T_\Sigma(X)$, we let $\theta_\Sigma(\sigma)(t_1,\ldots,t_k) = \sigma(t_1,\ldots,t_k)$. The \emph{$\Sigma$-term algebra}, denoted by $\sfT_\Sigma$, is the $\Sigma$-term algebra over $\emptyset$, i.e., $\sfT_\Sigma = \sfT_\Sigma(\emptyset)$.

Let $\sfA=(A,\theta)$ be a $\Sigma$-algebra and $A' \subseteq A$. We say that \emph{$A'$ is closed under $\theta(\Sigma)$} if, for every $k \in \mathbb{N}$,  $\sigma \in \Sigma^{(k)}$, and $a_1,\ldots,a_k \in A'$, we have $\theta(\sigma)(a_1,\ldots,a_k) \in A'$. We denote by $\langle A' \rangle_{\theta(\Sigma)}$ the smallest subset of $A$ which contains $A'$ and is closed under $\theta(\Sigma)$. The \emph{subalgebra of $\sfA$ generated by $A'$} is the $\Sigma$-algebra $( \langle A' \rangle_{\theta(\Sigma)},\theta')$ where $\theta'$ is obtained from $\theta$ by restricting each operation $\theta(\sigma)$ to $\langle A' \rangle_{\theta(\Sigma)}$.
If $A=\langle A' \rangle_{\theta(\Sigma)}$, then $\sfA$ is \emph{generated by $A'$}.
The $\Sigma$-algebra $\sfA=(A,\theta)$ is \emph{locally finite} if, for each finite subset $A' \subseteq A$, the set $\langle A' \rangle_{\theta(\Sigma)}$ is finite.

Let $\sfA_1=(A_1,\theta_1)$ and $\sfA_2=(A_2,\theta_2)$ be $\Sigma$-algebras. A \emph{$\Sigma$-algebra homomorphism (from $\sfA_1$ to $\sfA_2$)}  is a mapping $h: A_1 \to A_2$ such that the equality $h(\theta_1(\sigma)(a_1,\ldots,a_k)) = \theta_2(\sigma)(h(a_1),\ldots,h(a_k))$ holds for every $k \in \mathbb{N}$, $\sigma \in \Sigma^{(k)}$, and $a_1,\ldots,a_k \in A_1$.

In the rest of this section, $\sfA$ denotes an arbitrary $\Sigma$-algebra $(A,\theta)$.
  
Let $\mathcal{K}$ be an arbitrary class of $\Sigma$-algebras.
An algebra  $\A$  in  $\mathcal{K}$  is called  \emph{initial in} $\mathcal{K}$,
if for each algebra ~$\A'$ in $\mathcal{K}$, there exists exactly one $\Sigma$-algebra homomorphism $h\colon A \rightarrow A'$  from $\A$ to $\A'$ 
(cf. \cite[p.~164, Def.~4]{wec92}).
The $\Sigma$-term algebra $\sfT_\Sigma$ is initial in the class of all $\Sigma$-algebras.

A \emph{congruence on $\sfA$} is an equivalence relation $\rho\subseteq A\times A$ satisfying the following condition: for every $k\in \mathbb{N}$, $\sigma \in \Sigma^{(k)}$, $a_1,b_1,\ldots,a_k,b_k \in A$, if $a_i \,\rho\, b_i$ for each $i\in[k]$, then
\(\theta(\sigma)(a_1,\ldots,a_k ) \, \rho \, \theta(\sigma)(b_1,\ldots,b_k )\). 
Let  $\rho$ be a congruence on $\sfA$. The \emph{quotient algebra of $\sfA$ by $\rho$} is the $\Sigma$-algebra $\sfA/\!_\rho = (A/\!_\rho, \theta/\!_\rho)$ where, for every $k \in \mathbb{N}$,  $\sigma \in \Sigma^{(k)}$, and $[a_1]_{\rho},\ldots,[a_k]_{\rho} \in A/\!_\rho$, we have $\theta/\!_\rho(\sigma)([a_1]_{\rho},\ldots,[a_k]_{\rho}) =  [\theta(\sigma)(a_1,\ldots, a_k)]_{\rho}$.

Next we wish to consider $\Sigma$-identities and the congruence on $\sfA$ induced by a set of such identities.  
Let $Z=\{z_1,z_2,\ldots\}$ be a set of variables. For each $n\in \mathbb{N}$, we put $Z_n=\{z_1,\ldots,z_n\}$.  
Each element $t\in \T_\Sigma(A\cup Z_1)$ in which $z_1$ occurs exactly once is called a {\em $\Sigma A$-context}. The set of all $\Sigma A$-contexts is denoted by $\C_{\Sigma,A}$ (cf. unary algebraic functions in  \cite[Def.~1.3.13]{gecste84}). 
Let $t\in \T_\Sigma(A\cup Z_n)$ for some $n\in \mathbb{N}$. The mapping $t^\sfA: A^n \to A$ is defined by structural induction in a standard way. Such a mapping is called 
algebraic function in \cite[p.~22]{gecste84} and term function in \cite[Def.~II.10.2]{bursan81} for the special case that $t\in \T_\Sigma(Z_n)$.

An \emph{assignment} is a mapping $\varphi: Z \to A$. Each such mapping $\varphi$ extends uniquely to a $\Sigma$-algebra homomorphism
$\varphi'$ from  $\sfT_\Sigma(A\cup Z)$ to $\sfA$ satisfying that $\varphi'(a)=a$ for each $a\in A$. In the sequel, we drop the prime from  $\varphi'$. For an arbitrary $t \in \T_\Sigma(A\cup Z)$, we call $\varphi(t)$ the {\em evaluation of $t$ in $ \sfA$ at $\varphi$}. 
We note that for each $n\in \mathbb{N}$, $t\in \T_\Sigma(A\cup Z_n)$, and assignment $\varphi$ with $\varphi(z_i)=a_i$ for $i\in [n]$, we have $\varphi(t)=t^{\sfA}(a_1,\ldots,a_n)$.

A \emph{$\Sigma$-identity over $Z$} (or: identity) is a pair $(\ell, r)$ where $\ell,r \in \T_\Sigma(Z)$.  The $\Sigma$-algebra $\sfA$ \emph{satisfies the identity $(\ell, r)$} if, for every assignment $\varphi: Z \to A$, we have
$\varphi(\ell) = \varphi(r)$.

\begin{lemma}\rm \label{identitiestofactoralgebras}\cite[Th.~II.6.10 and Lm.~II.11.3]{bursan81}
If $\sfA$ satisfies an identity $(\ell, r)$ and $\rho$ is a congruence on $\sfA$, then $\sfA/\!_\rho$ also satisfies the identity $(\ell, r)$. 
\end{lemma}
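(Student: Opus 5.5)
The plan is to check the defining condition of an identity directly in the quotient, using the canonical projection. Let $\pi\colon A \to A/\!_\rho$ be given by $\pi(a) = [a]_\rho$. By the definition of $\theta/\!_\rho$, we have
\[
\pi(\theta(\sigma)(a_1,\ldots,a_k)) = \theta/\!_\rho(\sigma)([a_1]_\rho,\ldots,[a_k]_\rho).
\]
This equality is well defined exactly because $\rho$ is a congruence. Hence $\pi$ is a surjective $\Sigma$-algebra homomorphism from $\sfA$ to $\sfA/\!_\rho$.

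Next, fix an arbitrary assignment $\psi\colon Z \to A/\!_\rho$. Since $\pi$ is surjective, we can choose for each $z \in Z$ an element $\varphi(z) \in A$ with $\pi(\varphi(z)) = \psi(z)$; this uses the axiom of choice if $Z$ is infinite. This gives an assignment $\varphi\colon Z \to A$.

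The key step is to show that $\pi(\varphi(t)) = \psi(t)$ for every $t \in \T_\Sigma(Z)$. Here $\varphi$ and $\psi$ denote their homomorphic extensions to terms. The proof is by structural induction on $t$:
\begin{itemize}
\item For a variable $t = z$, the equality holds by the choice of $\varphi$.
\item For a constant $\sigma \in \Sigma^{(0)}$, it holds because $\pi$ preserves nullary operations.
\item For $t = \sigma(t_1,\ldots,t_k)$, it follows from the homomorphism property of $\pi$ together with the induction hypotheses.
\end{itemize}
Alternatively, $\pi \circ \varphi$ and $\psi$ are both homomorphisms from $\sfT_\Sigma(Z)$ that agree on the generators $Z$, so they coincide on all of $\T_\Sigma(Z)$.

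Finally, since $\sfA$ satisfies $(\ell,r)$, we have $\varphi(\ell) = \varphi(r)$. Applying $\pi$ and the previous step gives
\[
\psi(\ell) = \pi(\varphi(\ell)) = \pi(\varphi(r)) = \psi(r).
\]
Since $\psi$ was arbitrary, $\sfA/\!_\rho$ satisfies $(\ell,r)$.

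None of these steps is a real obstacle. The only points needing care are that the quotient operations are well defined, which is where the congruence property is used, and that every assignment into the quotient lifts to one into $A$, which is where surjectivity of $\pi$ is used.
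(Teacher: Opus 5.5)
Your proof is correct and is essentially the argument behind the references the paper cites instead of giving a proof: Th.~II.6.10 of \cite{bursan81} says the natural map $a \mapsto [a]_\rho$ is a surjective homomorphism from $\sfA$ onto $\sfA/\!_\rho$, and Lm.~II.11.3 shows that identities pass to homomorphic images by lifting an assignment along the surjection and using that homomorphisms commute with term evaluation, which is exactly what you do. One small remark: the axiom of choice is not needed, since only the finitely many variables occurring in $\ell$ and $r$ have to be lifted.
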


Let $E$ be a set of identities. The \emph{congruence (relation)  on $\sfA$ induced by $E$}, denoted by $\approx_E$, is the smallest congruence on $\sfA$ which contains the set
\begin{equation}\label{eq:identity-in-term-algebra}
E(\sfA)=\{(\varphi(\ell), \varphi(r))\mid (\ell, r)\in E, \varphi: Z \to A\}.
\end{equation}

The following lemma is well known and can be proven similarly to \cite[p.~176, Lm.~24]{wec92}.

\begin{lemma}\label{lm:free-algebra-quotient} \rm Let  $E$ be a set of $\Sigma$-identities. Then $\sfA/\!_{\approx_E}$ satisfies all identities in $E$.
\end{lemma}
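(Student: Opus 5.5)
To prove that $\sfA/\!_{\approx_E}$ satisfies every identity in $E$, I would fix an identity $(\ell,r)\in E$ and an arbitrary assignment $\psi\colon Z\to A/\!_{\approx_E}$ into the quotient, and show $\psi(\ell)=\psi(r)$. The idea is to pull $\psi$ back to an assignment into $A$ and then use that the canonical projection is a homomorphism.

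First I choose representatives. Using the axiom of choice, pick for each $z\in Z$ an element $\varphi(z)\in A$ with $[\varphi(z)]_{\approx_E}=\psi(z)$. This gives an assignment $\varphi\colon Z\to A$.

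Next I use the canonical map $\pi\colon A\to A/\!_{\approx_E}$, $a\mapsto[a]_{\approx_E}$. By the definition of $\theta/\!_\rho$ in the quotient algebra, $\pi$ is a $\Sigma$-algebra homomorphism from $\sfA$ to $\sfA/\!_{\approx_E}$.

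The key step is the identity $\pi(\varphi(t))=\psi(t)$ for every $t\in\T_\Sigma(Z)$. I prove it by structural induction on $t$:
\begin{itemize}
\item For $t=z\in Z$, it holds by the choice of $\varphi$.
\item For $t=\alpha\in\Sigma^{(0)}$, both sides equal the constant $\theta/\!_{\approx_E}(\alpha)()$, because $\pi$ is a homomorphism.
\item For $t=\sigma(t_1,\ldots,t_k)$, both $\pi\circ\varphi$ and $\psi$ are homomorphisms on the term level, so the claim follows directly from the induction hypothesis.
\end{itemize}
Equivalently, $\pi\circ\varphi$ and $\psi$ are two homomorphisms out of $\sfT_\Sigma(Z)$ that agree on the generators $Z$, and hence they are equal.

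Finally, $(\varphi(\ell),\varphi(r))\in E(\sfA)\subseteq{\approx_E}$ by (\ref{eq:identity-in-term-algebra}). Therefore
\[
\psi(\ell)=\pi(\varphi(\ell))=[\varphi(\ell)]_{\approx_E}=[\varphi(r)]_{\approx_E}=\pi(\varphi(r))=\psi(r).
\]
Since $\psi$ was arbitrary, $\sfA/\!_{\approx_E}$ satisfies $(\ell,r)$.

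There is no real obstacle in this argument. The only care needed is in the step $\pi\circ\varphi=\psi$, namely that the homomorphism property of $\pi$ also covers the nullary symbols. In the paper's setup assignments extend to terms over $A\cup Z$, but since $\ell,r\in\T_\Sigma(Z)$ the extension to constants from $A$ is never used.
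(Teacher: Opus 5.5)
Your proof is correct, and it is essentially the standard argument the paper relies on when it says the lemma ``can be proven similarly to'' Wechler's Lemma~24. You lift an assignment into $\sfA/\!_{\approx_E}$ to an assignment into $A$ by choosing representatives, use that the canonical projection is a $\Sigma$-algebra homomorphism so that evaluation commutes with it, and conclude from $E(\sfA)\subseteq{\approx_E}$.
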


Next we extend the well-known syntactic characterization of the congruence on $\sfT_\Sigma(Z)$ induced by a set $E \subseteq \T_\Sigma(Z) \times \T_\Sigma(Z)$ of $\Sigma$-identities, 
cf. \cite[Thm.~ 3.1.12]{baanip98} and \cite[Thm.~II.14.17, II.14.19]{bursan81},
to a characterization of the congruence on $\sfA$ induced by  $E$. In fact, this is closely related to a general description of a congruence generated by a binary relation on  $\A$,
cf. \cite[Sect.~2.1.2]{wec92}.

Let $E$ be a set of $\Sigma$-identities. The \emph{reduction relation induced by $E$ on $A$}, denoted  by  $\Rightarrow_E$, is the binary relation on $A$ defined as follows:
for every $a,b \in A$, we let $a \Rightarrow_E b$ if there exist a $\Sigma A$-context $c\in \C_{\Sigma,A}$, an identity $(\ell, r)$ in $E$,  and an assignment $\varphi: Z \to A$ such that
$a= c^\sfA(\varphi(\ell))$ and $b = c^\sfA(\varphi(r))$.
In this  case  we say that $b$ is obtained from $a$  in a reduction step (using the identity $(\ell, r)$).
For an identity $e=(\ell,r)$ we define $e^{-1}=(r,\ell)$ and we let $E^{-1}=\{ e^{-1} \mid e\in E\}$.
Moreover, we abbreviate $\Rightarrow_{E\cup E^{-1}}$ by $\Leftrightarrow_E$.

The subsequent characterization says that, for any two elements $a,b\in A$,
we have $a \approx_E b$ if and only if, there is a finite sequence of elements $a=a_0,a_1,\ldots,a_n=b$ of $A$ for some $n\in \mathbb{N}$ such that for each $i\in [n]$, the element $a_i$ can be obtained from $a_{i-1}$ in a reduction step using an identity in $E$  or the inverse of an identity. 
%We include the proof in the Appendix. 

\begin{lemma}\rm \label{lm:approx-characterization}\  \cite[p.~98, Thm.~6]{wec92} Let  $E$ be a set of $\Sigma$-identities and $\approx_E$ the congruence on $\sfA$ induced by $E$. Then $\approx_E \, =\, \Leftrightarrow^*_E$.
\end{lemma}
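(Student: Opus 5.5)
We prove the two inclusions separately. Write $\equiv$ for the relation $\Leftrightarrow^*_E$.

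\emph{The inclusion ${\Leftrightarrow^*_E} \subseteq {\approx_E}$.} Since $\approx_E$ is an equivalence relation, it suffices to show $\Rightarrow_E \subseteq \approx_E$ and $\Rightarrow_{E^{-1}} \subseteq \approx_E$. The second follows from the first by symmetry.

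So let $a = c^\sfA(\varphi(\ell))$ and $b = c^\sfA(\varphi(r))$ with $(\ell,r)\in E$. By definition, $(\varphi(\ell),\varphi(r)) \in E(\sfA) \subseteq {\approx_E}$. We then need the auxiliary fact that every congruence $\rho$ is compatible with every context: if $x \,\rho\, y$, then $c^\sfA(x) \,\rho\, c^\sfA(y)$ for each $c \in \C_{\Sigma,A}$. This is proved by structural induction on $c$:
\begin{itemize}
\item If $c = z_1$, the claim is trivial.
\item Otherwise $c = \sigma(t_1,\ldots,t_k)$, and $z_1$ occurs in exactly one $t_j$, which is itself a context. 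The other $t_i$ contain no variable, so they evaluate to fixed elements, and $\rho$ is reflexive on them. The compatibility of $\rho$ with $\theta(\sigma)$ together with the induction hypothesis for $t_j$ gives the claim.
\end{itemize}

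\emph{The inclusion ${\approx_E} \subseteq {\Leftrightarrow^*_E}$.} Since $\approx_E$ is the smallest congruence containing $E(\sfA)$, it suffices to show that $\equiv$ is a congruence containing $E(\sfA)$.

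First, $E(\sfA) \subseteq {\Rightarrow_E}$: take the trivial context $z_1$, whose evaluation is the identity map.

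Second, $\equiv$ is an equivalence relation:
\begin{itemize}
\item Reflexivity and transitivity hold by the definition of the closure ${}^*$.
\item Symmetry holds because $\Leftrightarrow_E$ is symmetric: a step using $(\ell,r)$ from $a$ to $b$ is a step using $(r,\ell)\in E^{-1}$ from $b$ to $a$, and vice versa.
\end{itemize}

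Third, $\equiv$ is compatible with the operations. Let $\sigma\in\Sigma^{(k)}$ and $a_i \equiv b_i$ for $i \in [k]$. We change one argument at a time:
\[
\theta(\sigma)(a_1,\ldots,a_k) \equiv \theta(\sigma)(b_1,a_2,\ldots,a_k) \equiv \cdots \equiv \theta(\sigma)(b_1,\ldots,b_k).
\]
So we only need one-argument compatibility, and by transitivity only for a single step $a_j \Leftrightarrow_E b_j$.

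Suppose $a_j = c^\sfA(\varphi(\ell))$ and $b_j = c^\sfA(\varphi(r))$. Form the new context
\[
c' = \sigma(x_1,\ldots,x_{j-1},c,x_{j+1},\ldots,x_k),
\]
where each $x_i$ is the fixed argument at position $i$ (either $b_i$ or $a_i$, depending on the stage), regarded as a constant from $A$. This $c'$ contains $z_1$ exactly once, so it belongs to $\C_{\Sigma,A}$. Moreover, $c'^{\sfA}(x) = \theta(\sigma)(x_1,\ldots,c^\sfA(x),\ldots,x_k)$. Hence the two $\sigma$-values are related by a single step using the same identity and assignment.

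The main point needing care is this context-extension step. Elements of $A$ appear as constants in $\T_\Sigma(A\cup Z_1)$, and we must verify that the evaluation of $c'$ factors as stated. This follows directly from the structural definition of $t^\sfA$. Combining the two inclusions gives $\approx_E = \Leftrightarrow^*_E$.
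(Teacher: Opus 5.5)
Your proof is correct and complete. The paper gives no proof of its own and only cites Wechler; your argument is the standard one behind that result: every congruence containing $E(\sfA)$ is closed under $\Sigma A$-contexts (your structural induction) and hence contains $\Leftrightarrow^*_E$, while $\Leftrightarrow^*_E$ is itself a congruence containing $E(\sfA)$, via the trivial context $z_1$, symmetry of $\Leftrightarrow_E$, and the context extension $c'=\sigma(x_1,\ldots,c,\ldots,x_k)$ for one-argument compatibility.
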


\noindent {\bf Strong bimonoids.}
\sloppy A \emph{strong bimonoid} \cite{drostuvog10,cirdroignvog10,rad10,drovog10,drovog12} is an algebra $\B=(B,\oplus,\otimes,\0,\1)$ such that $(B,\oplus,\0)$ is a commutative monoid, $(B,\otimes,\1)$ is a monoid, and $\0$ is annihilating with respect to $\otimes$, i.e., for each $b \in B$ we have $b\otimes \0 = \0 \otimes b = \0$. The operations $\oplus$ and $\otimes$ are called addition and multiplication, respectively. For examples of strong bimonoids we refer to \cite{drostuvog10,cirdroignvog10} (also cf. \cite[Ex.~2.7.15]{fulvog26}).

Let $\B=(B,\oplus,\otimes,\0,\1)$ be a strong bimonoid. It is
\begin{compactitem}
\item \emph{idempotent} if, for each $b \in B$, we have $b \oplus b = b$,
\item  \emph{almost idempotent} if, for each $b \in B$, we have $ b \oplus b = b \oplus b \oplus b$,
\item \emph{left-distributive} if, for every $a,b,c \in B$, we have $a \otimes (b \oplus c) = (a \otimes b) \oplus (a \otimes c)$,
\item \emph{right-distributive} if, for every $a,b,c \in B$, we have $(a \oplus b) \otimes c = (a \otimes c) \oplus (b \otimes c)$,
    \item \emph{additively locally finite} if  $(B,\oplus,\0)$ is locally finite,
  \item \emph{multiplicatively locally finite} if  $(B,\otimes,\1)$ is locally finite,
  and
  \item \emph{bi-locally finite} if it is additively and multiplicatively locally finite.
  \end{compactitem}
A \emph{semiring} \cite{hebwei93,gol99} is a distributive (i.e., left-distributive and right-distributive)  strong bimonoid.

\begin{observation}\label{obs:biloc-fin+right-disrt-loc-fin} \rm   Let $\B=(B,\oplus,\otimes,\0,\1)$ be a strong bimonoid.  
  \begin{compactitem}
  \item[(a)] If $\B$ is locally finite,  then it is bi-locally finite.
  \item[(b)] If $\B$ is almost idempotent, then it is additively locally finite.
      \end{compactitem}
\end{observation}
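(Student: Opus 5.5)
Both parts of Observation~\ref{obs:biloc-fin+right-disrt-loc-fin} follow directly from the definitions: each reduces finiteness of a generated substructure to finiteness of some already known finite set.

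For (a), assume $\B$ is locally finite and let $B' \subseteq B$ be finite. The submonoid of $(B,\oplus,\0)$ generated by $B'$ is contained in $\langle B' \rangle_{\{\oplus,\otimes,\0,\1\}}$. This is because the latter set contains $B'$ and $\0$ and is closed under $\oplus$, while the former is the smallest such set. The latter set is finite by assumption, so the additive submonoid generated by $B'$ is finite. The same argument, with $\otimes$ and $\1$ in place of $\oplus$ and $\0$, shows that the multiplicative submonoid generated by $B'$ is finite. Hence $\B$ is both additively and multiplicatively locally finite, i.e., bi-locally finite.

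For (b), assume $\B$ is almost idempotent and let $B' = \{b_1,\ldots,b_n\}$ be finite. Since $(B,\oplus,\0)$ is a commutative monoid, every element of the submonoid generated by $B'$ has the form
\[
m_1 b_1 \oplus \cdots \oplus m_n b_n
\]
for some $m_1,\ldots,m_n \in \mathbb{N}$. Here $m b$ denotes the $m$-fold sum of $b$, with $0 b = \0$.

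The key step is to show that $m b \in \{\0, b, b \oplus b\}$ for every $b \in B$ and $m \in \mathbb{N}$. We prove this by induction on $m$: the cases $m \le 2$ hold trivially. For $m \ge 3$ we have
\[
m b = (m-3) b \oplus (b \oplus b \oplus b) = (m-3) b \oplus (b \oplus b) = (m-1) b,
\]
using almost idempotence and commutativity, and the induction hypothesis applies to $(m-1) b$.

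It follows that each coefficient $m_i$ can be replaced by a value in $\{0,1,2\}$ without changing the sum. So the generated submonoid has at most $3^n$ elements, which means $\B$ is additively locally finite.

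Neither part presents a genuine obstacle. The only point needing care is the normal-form argument in (b), which requires commutativity of $\oplus$ so that all copies of each $b_i$ can be grouped together before reducing their multiplicity.
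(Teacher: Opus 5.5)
Your proof is correct. The paper states this observation without proof, as immediate from the definitions, and your argument is the routine verification it leaves to the reader: for (a), the generated submonoids sit inside the finite generated sub-bimonoid; for (b), you use the normal form $m_1b_1\oplus\cdots\oplus m_nb_n$ with each $m_ib_i\in\{\0,b_i,b_i\oplus b_i\}$, which gives the bound $3^n$.
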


\begin{example} \rm \label{ex:maps-N}
 We consider the set $C$ of mappings $f: \mathbb{N} \to \mathbb{N}$ such that $f(0)=0$. Then the algebra $\mathrm{Maps}(\mathbb{N})=(C,+,\circ,\widetilde{0},\id)$ is a strong bimonoid where
for every $f_1,f_2: \mathbb{N} \to \mathbb{N}$ and $n \in \mathbb{N}$, we let $(f_1 + f_2)(n) = f_1(n) + f_2(n)$ and $(f_1 \circ f_2)(n)=f_1(f_2(n))$ and
 $\widetilde{0}(n) = 0$ and $\id(n)=n$. 
  Obviously, $\mathrm{Maps}(\mathbb{N})$ is right-distributive. However, it is not left-distributive, because e.g.
 \((\mathrm{sq} \circ (\id + \id ))(1) = 4 \ne 2 = (\mathrm{sq} \circ \id + \mathrm{sq} \circ \id)(1)
  \)
  where $\mathrm{sq}: \mathbb{N} \to \mathbb{N}$ be such that $\mathrm{sq}(n) = n^2$  for each $n \in \mathbb{N}$.
  \hfill $\Box$
    \end{example}

%%%%%%%%%%%%%%%%%%%%%%%%%%%%%%%%%%%%%%%%%%%%%

\section{Bi-locally  finite and right-distributive strong bimonoids}\label{sect:wlc-strong-bimonoids}

The main goal of this section is to show that there exists a right-distributive strong bimonoid which is bi-locally finite but not locally finite, cf.~Theorem~\ref{thm:M-is-what-we-want}.
\emph{In the rest of this section, we let $\Sigma = \{+^{(2)}, \times^{(2)}, 0^{(0)}, 1^{(0)}\}$ and let $X$ be a non-empty set.}

For the proof, we define an almost idempotent right-distributive strong bimonoid $\sfM(X)$ and prove that it is bi-locally finite  and not locally finite. We define $\M(X)$ in two steps. In the first step, we define an algebra  $\sfST_\Sigma(X)$ which looks very similar to the free  $\Sigma$-algebra  $\sfT_\Sigma(X)$, but incorporates
the usual laws for  $0$  and $1$. Then the quotient
$\sfS(X) = \sfST_\Sigma(X)/_{\approx_E}$  by the congruence induced by a set  $E$ of natural identities is a strong bimonoid which is right-distributive and almost idempotent (see Lemma
\ref{prop:SB-strong-bimonoid}). The almost idempotency implies that  $\sfS(X)$  is already additively locally finite.
In the second step, we factorize $\sfS(X)$ by the congruence relation $\sim_\mathrm{la}$ which identifies multiplicatively ``large'' elements of $\sfS(X)$ (cf. Definition~\ref{def:large-polynomial}) in order to obtain the multiplicatively locally finite algebra $\M(X)$. In Theorem~\ref{thm:M-is-what-we-want} we prove that $\M(X)$ is not locally finite.
  This uses Lemma \ref{lm:approx-characterization} and exploits
our choice of the identities of  $E$, namely, that particular
constructed terms permit only one reduction rule (or its inverse) from  $E$.

\underline{Step 1:} We write elements of $\T_\Sigma(X)$ in infix form, e.g., we write $(1+1)\times x$ for $\times(+(1,1),x)$ where $x\in X$.
We call a term  $t  \in \T_\Sigma(X)$  \emph{simple}, if  
\begin{compactitem}
\item $t = 0$  or
\item $t\ne 0$  and it contains neither   $0$, nor a subterm of the form  $1 \times s$,  nor a subterm of the form $s \times 1$.
\end{compactitem}Let $\rmST(X)$ denote the set of all simple terms in $\T_\Sigma(X)$. Note that $1$ is simple, hence
e.g., $1+t\in\rmST(X) $ for each $t\in \rmST(X)$. 
We define the algebra $\sfST_\Sigma(X)=(\rmST(X),+_\mathsf{ST},\times_\mathsf{ST},0,1)$ as follows:
\begin{compactitem}
\item for each $t\in \rmST(X)$, let $t+_\mathsf{ST} 0= 0 +_\mathsf{ST} t=t$ and   $t \times_\mathsf{ST} 0= 0 \times_\mathsf{ST} t=0$,
\item for each $t\in \rmST(X)$, let $t\times_\mathsf{ST} 1= 1 \times_\mathsf{ST} t=t$,
\item for every $s,t\in \rmST(X)\setminus\{0,1\}$, let $s +_\mathsf{ST} t = s+t$ and $s \times_\mathsf{ST} t = s\times t$.
\end{compactitem}

We note here that $\sfST_\Sigma(X)$  is the free algebra over  $X$  in the class of all $\Sigma$-algebras  $\mathsf{A}=(A,\theta)$  satisfying the laws  $a + 0 = 0 + a = a$, $a \times 0 = 0 \times a = 0$  and  $a \times 1 = 1 \times a = a$  for each  $a \in A$, cf. e.g. \cite[p.~164, Def.~4]{wec92}.

Next let $E = \{e_1,e_2,e_3,e_4,e_5\}$ be the set of the following five identities:
\begin{center}
\begin{tabular}{ll}
$e_1: \big(z_1 + (z_2 + z_3) \ , \ (z_1 + z_2) + z_3\big)$ \hspace{8mm} & $e_4: \big(z_1 + z_1 \ , \ z_1 + (z_1 + z_1) \big)$ \\[2mm]
$e_2: \big(z_1 + z_2 \ , \ z_2 + z_1\big)$ & $e_5:  \big((z_1 + z_2)\times z_3 \ , \ (z_1 \times z_3) + (z_2 \times z_3) \big)$ \\[2mm]
$e_3: \big(z_1 \times (z_2 \times z_3)  \ , \  (z_1 \times z_2) \times z_3\big)$  
\end{tabular}
\end{center}
We note that $E \subseteq \T_\Sigma(Z_3) \times \T_\Sigma(Z_3)$.
Then we consider the quotient algebra 
\[\sfST_\Sigma(X)/_{\approx_E}=(\rmST(X)/_{\approx_E},+_\mathsf{ST}/_{\approx_E},\times_\mathsf{ST}/_{\approx_E},[0]_{\approx_E},[1]_{\approx_E}).\]
Lastly, we abbreviate the latter notation by $\sfS(X)=(\rmS(X),+_\sfS,\times_\sfS,0_\sfS,1_\sfS)$, and,  for each $s\in \rmST(X)$, we abbreviate $[s]_{\approx_{E}}$ by $[s]_{E}$.

\begin{lemma}\rm \label{prop:SB-strong-bimonoid} The algebra $\sfS(X)$ is an almost idempotent right-distributive strong bimonoid.
\end{lemma}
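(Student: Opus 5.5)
The plan is to derive every required law of $\sfS(X)$ from Lemma~\ref{lm:free-algebra-quotient}, which says that $\sfST_\Sigma(X)/_{\approx_E}$ satisfies all identities in $E$. The laws concerning $0$ and $1$ are not in $E$; they hold already in $\sfST_\Sigma(X)$, and I will transfer them to the quotient with Lemma~\ref{identitiestofactoralgebras}.

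First I check that the relevant identities hold in $\sfST_\Sigma(X)$ itself. These are $(z_1+0,z_1)$, $(0+z_1,z_1)$, $(z_1\times 0,0)$, $(0\times z_1,0)$, $(z_1\times 1,z_1)$ and $(1\times z_1,z_1)$. By the definition of $+_\mathsf{ST}$ and $\times_\mathsf{ST}$, each of them holds for every assignment into $\rmST(X)$. The only care needed is the overlapping cases. For example, $0\times_\mathsf{ST} 1$ is defined both as $0$ by the zero clause and as $0$ by the unit clause, so the definitions agree. In the same way $0+_\mathsf{ST}0=0$ is consistent.

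Second, I apply Lemma~\ref{identitiestofactoralgebras} to the congruence $\approx_E$. This shows that $\sfS(X)$ satisfies these six identities. So $0_\sfS$ is an additive identity, $0_\sfS$ is multiplicatively annihilating, and $1_\sfS$ is a multiplicative identity.

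Third, Lemma~\ref{lm:free-algebra-quotient} gives that $\sfS(X)$ satisfies $e_1,\dots,e_5$. Each identity yields one property of $\sfS(X)$:
\begin{compactitem}
\item $e_1$ gives associativity of $+_\sfS$;
\item $e_2$ gives commutativity of $+_\sfS$;
\item $e_3$ gives associativity of $\times_\sfS$;
\item $e_4$ gives almost idempotency $b+b=b+b+b$;
\item $e_5$ gives right-distributivity.
\end{compactitem}
Together with the second step, $(\rmS(X),+_\sfS,0_\sfS)$ is a commutative monoid, $(\rmS(X),\times_\sfS,1_\sfS)$ is a monoid, and $0_\sfS$ is annihilating. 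Hence $\sfS(X)$ is an almost idempotent right-distributive strong bimonoid.

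The only potential obstacle is the well-definedness in the first step. One has to confirm that $\sfST_\Sigma(X)$ really is a $\Sigma$-algebra, i.e.\ that the operations always return simple terms. The key case is that for $s,t\notin\{0,1\}$ simple, the terms $s+t$ and $s\times t$ are again simple. This holds because neither $s$ nor $t$ is $0$ or $1$, so no subterm $1\times s$ or $s\times 1$, and no $0$, is created at the root. Apart from this check the proof is a direct application of the two cited lemmas.
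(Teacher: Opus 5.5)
Your proposal is correct and follows essentially the same route as the paper: the laws for $0$ and $1$ hold in $\sfST_\Sigma(X)$ and pass to the quotient by Lemma~\ref{identitiestofactoralgebras}. Lemma~\ref{lm:free-algebra-quotient} gives $e_1$--$e_5$ in $\sfS(X)$, which yield the strong bimonoid axioms, almost idempotency and right-distributivity. Your extra well-definedness check is fine, though the clauses as literally written do not define $1 +_{\mathsf{ST}} t$ for $t \neq 0$; under the evident intended reading $s +_{\mathsf{ST}} t = s+t$ for all $s,t \neq 0$, the result is again simple, so nothing changes.
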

\begin{proof} By the definition of $+_\mathsf{ST}$ and $\times_\mathsf{ST}$,  the terms 0 and 1 are the additive and multiplicative neutral elements in $\sfST_\Sigma(X)$, respectively. Moreover, 0 is annihilating with respect to $\times_\mathsf{ST}$. Each of these properties can be described by a corresponding identity satisfied by  $\sfST_\Sigma(X)$. By Lemma \ref{identitiestofactoralgebras}, $\sfS(X)$ also satisfies these identities. Hence, $0_\sfS$ and $1_\sfS$ are the additive and multiplicative neutral elements in $\sfS(X)$, respectively, and $0_\sfS$ is annihilating with respect to  $\times_\sfS$.
Moreover, by Lemma \ref{lm:free-algebra-quotient} (with $\sfA=\sfST_\Sigma(X)$), the algebra $\sfS(X)$ satisfies the identities  $e_1 - e_5$. Identities $e_1 - e_3$, together with the mentioned properties of $0_\sfS$ and $1_\sfS$, ensure that  $\sfS(X)$  is a strong bimonoid, then identities $e_4$ and $e_5$ ensure that it is almost idempotent and right-distributive, respectively.
\end{proof}

\underline{Step 2:} Clearly,  $\sfS(X)$  is not multiplicatively locally finite, since for each  $x \in X$,
the multiplicative submonoid  $\langle\{[x]_E\}\rangle_{\{\times_{\sfS},1_\sfS\}}$  generated by  $\{[x]_E\}$  is infinite.
In order to construct a quotient algebra that is
multiplicatively locally finite we define the concept  of a large element in  $\rmS(X)$. Intuitively, all products $ p \times_{\sfS} (q \times_{\sfS} r)$ of elements $p, q, r \in \rmS(X) \setminus \{ 0_\sfS, 1_\sfS \}$ are large, and any element obtained from a large element by adding or multiplying it with any further element should also be large.
The formal definition is the following.

\begin{definition}\label{def:large-polynomial} \rm An element $p \in \rmS(X) \setminus \{ 0_\sfS, 1_\sfS \}$  is \emph{large},
if  there is a term $s\in \rmST(X)$ such that:
\begin{compactitem}
\item[(a)]  $p = [s]_{E}$  and 
\item[(b)]  $s$ has a subterm of the form $t_1\times(t_2\times t_3)$ or  $(t_1\times t_2)\times t_3$ for some $t_1,t_2,t_3 \in  \rmST(X)$.\hfill$\Box$
\end{compactitem}
\end{definition}

The next observation is obvious by the definition of a large element.

\begin{observation}\label{lm:p-q-r-large}\label{lm:p-generatum-large}\rm $\,$ Let  $p, q, r \in \rmS(X) \setminus \{ 0_\sfS, 1_\sfS \}$.
\begin{compactitem}
\item[(a)] Then  $p \times_{\sfS} (q \times_{\sfS} r)$  is large. 
\item[(b)] If  $p$  is large, then  $p +_{\sfS} q$, $p \times_{\sfS} q$,  and  $q \times_{\sfS} p$  are also large. 
\end{compactitem}
\end{observation}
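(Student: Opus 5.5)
Both parts follow by choosing suitable simple-term representatives of the given classes and building the witnessing term of Definition~\ref{def:large-polynomial} directly. First I fix representatives: for $p,q,r\in \rmS(X)\setminus\{0_\sfS,1_\sfS\}$, choose $s_p,s_q,s_r\in\rmST(X)$ with $p=[s_p]_E$, $q=[s_q]_E$, $r=[s_r]_E$.

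Since $p\neq 0_\sfS=[0]_E$, we have $s_p\ne 0$. Similarly, $p\neq 1_\sfS=[1]_E$ gives $s_p\neq 1$. The same holds for $s_q$ and $s_r$. Hence, by the third clause of the definition of $\sfST_\Sigma(X)$, the operations $+_\mathsf{ST}$ and $\times_\mathsf{ST}$ on these representatives act syntactically. For example,
\[
s_q\times_\mathsf{ST} s_r=s_q\times s_r\notin\{0,1\},
\qquad
s_p\times_\mathsf{ST}(s_q\times s_r)=s_p\times(s_q\times s_r).
\]
One also checks that such a term is again simple: it contains no $0$, and no factor equal to $1$ is introduced.

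For (a), the quotient operations give
\[
p\times_\sfS(q\times_\sfS r)=[s_p\times(s_q\times s_r)]_E .
\]
This term has a subterm of the form $t_1\times(t_2\times t_3)$, so condition (b) of Definition~\ref{def:large-polynomial} holds. It remains to check that this class is not $0_\sfS$ or $1_\sfS$, since that is required by the definition. This is the one real obstacle: we must show that $\approx_E$ does not identify this term with $0$ or with $1$. I would argue via Lemma~\ref{lm:approx-characterization}. Each identity $e_1,\ldots,e_5$ relates two terms neither of which is $0$ or $1$, with both sides of the form $u+v$ or $u\times v$. Also, a context applied to a non-constant term yields a non-constant term, given the behaviour of $+_\mathsf{ST}$ and $\times_\mathsf{ST}$ on arguments outside $\{0,1\}$; the only way to reach $0$ or $1$ would be through the neutral and annihilation clauses, and these cannot produce $0$ or $1$ from arguments outside $\{0,1\}$. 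Hence the classes of $0$ and $1$ are singletons $\{0\}$ and $\{1\}$. Therefore any element represented by a term other than $0$ and $1$ lies outside $\{0_\sfS,1_\sfS\}$.

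For (b), take a witness $s$ for $p$, which already contains a subterm $t_1\times(t_2\times t_3)$ or $(t_1\times t_2)\times t_3$. Then $p+_\sfS q=[s+s_q]_E$, and similarly $p\times_\sfS q=[s\times s_q]_E$ and $q\times_\sfS p=[s_q\times s]_E$. Each of these terms is simple and contains $s$, hence contains the required subterm. Each is also different from $0$ and $1$, so by the singleton-class fact its class is not $0_\sfS$ or $1_\sfS$. Therefore all three elements are large.
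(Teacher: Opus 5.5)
The paper just calls this observation obvious. Your construction of the witnessing terms is exactly the intended argument: $s_p\times(s_q\times s_r)$, $s+s_q$, $s\times s_q$ and $s_q\times s$ are all simple, and they are computed syntactically because every representative lies outside $\{0,1\}$. You are also right that Definition~\ref{def:large-polynomial} forces a check that these classes differ from $0_\sfS$ and $1_\sfS$.

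\textbf{The gap.} Your justification of $[0]_E=\{0\}$ and $[1]_E=\{1\}$ does not work as written. Consider a reduction step $c^{\sfA}(\varphi(\ell))\Rightarrow_E c^{\sfA}(\varphi(r))$, with $\sfA=\sfST_\Sigma(X)$. Here $\varphi$ ranges over all of $\rmST(X)$ and $\varphi(\ell)$ is evaluated in $\sfST_\Sigma(X)$, so the instances can themselves be $0$ or $1$. Examples are $e_3$ with $\varphi\equiv 1$, or $e_5$ with $\varphi(z_3)=0$. This holds even though the identities, as syntactic terms, are not $0$ or $1$. Moreover, contexts may contain the constants $0$ and $1$. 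So the claim that a context applied to a non-constant term yields a non-constant term is false: for $c=0\times z_1$ we get $c^{\sfA}(u)=0$ for every $u$, since the annihilation clause is triggered by the constant sitting in the context. Such steps are harmless only because both sides collapse in the same way, and that is precisely what your argument does not establish.

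\textbf{The repair.} Let $\kappa(t)\in\{0,1,\ast\}$ record whether $t$ is $0$, $1$, or neither. A finite case check gives $\kappa(\varphi(\ell))=\kappa(\varphi(r))$ for all $(\ell,r)\in E$ and all $\varphi$. For instance, both sides of $e_5$ are $0$ iff $\varphi(z_3)=0$ or $\varphi(z_1)=\varphi(z_2)=0$, and both are $1$ iff $\varphi(z_3)=1$ and $\{\varphi(z_1),\varphi(z_2)\}=\{0,1\}$. Next, $\kappa(u+_\mathsf{ST}d)$, $\kappa(u\times_\mathsf{ST}d)$ and $\kappa(d\times_\mathsf{ST}u)$ depend only on $\kappa(u)$ and $\kappa(d)$. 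Hence, by induction on $c$, $\kappa(c^{\sfA}(u))$ depends only on $c$ and $\kappa(u)$. So every reduction step preserves $\kappa$, and Lemma~\ref{lm:approx-characterization} gives the singleton classes.

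A shorter route is to map $0\mapsto 0$, $1\mapsto 1$ and every other simple term to $2$. This is a homomorphism from $\sfST_\Sigma(X)$ into the commutative semiring $\{0,1,2\}$ with addition and multiplication truncated at $2$. For this, read $s+_\mathsf{ST}t=s+t$ for all nonzero $s,t$, as the paper evidently intends, since it uses $1+t_n$. That semiring satisfies $e_1$--$e_5$, so $\approx_E$ is contained in the kernel, which separates $0$, $1$ and all remaining simple terms. With either fix in place, the rest of your proof goes through unchanged.
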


Next we will, intuitively, ``identify'' all large elements. For this, we define a binary relation $\sim_\mathrm{la}$  on  $\rmS(X)$  as follows: for every $q, r \in \rmS(X)$, we let
$q \sim_\mathrm{la} r$  if and only if  $q = r$  or  both  $q$  and  $r$ are large.

\begin{lemma}\label{lm:simL-congruence}\rm The relation $\sim_\mathrm{la}$ is a congruence on $\sfS(X)$.
\end{lemma}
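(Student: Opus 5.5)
We need to show $\sim_\mathrm{la}$ is an equivalence relation and compatible with $+_\sfS$ and $\times_\sfS$ (the constants $0_\sfS$, $1_\sfS$ are nullary, so nothing needs checking for them).

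For the equivalence property: reflexivity and symmetry are immediate from the definition. For transitivity, let $q\sim_\mathrm{la} r$ and $r\sim_\mathrm{la} u$. If one of the two relationships is an equality, then $q\sim_\mathrm{la} u$ trivially. Otherwise $q,r$ are both large and $r,u$ are both large, so $q$ and $u$ are large and hence $q\sim_\mathrm{la} u$.

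For compatibility, let $q_1\sim_\mathrm{la} r_1$ and $q_2\sim_\mathrm{la} r_2$; we show $q_1 \circ q_2 \sim_\mathrm{la} r_1\circ r_2$ for $\circ\in\{+_\sfS,\times_\sfS\}$. It suffices to change one argument at a time, since
\[q_1\circ q_2 \sim_\mathrm{la} r_1\circ q_2 \sim_\mathrm{la} r_1\circ r_2\]
combined with transitivity gives the claim. So assume $q\sim_\mathrm{la} r$ with $q\ne r$, i.e. $q$ and $r$ are both large, and let $p\in\rmS(X)$ be arbitrary. We must show that $q\circ p\sim_\mathrm{la} r\circ p$ and $p\circ q\sim_\mathrm{la} p\circ r$. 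By commutativity of $+_\sfS$ (Lemma~\ref{prop:SB-strong-bimonoid}), the additive case needs only one side.

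\emph{Case $p\notin\{0_\sfS,1_\sfS\}$.} By Observation~\ref{lm:p-generatum-large}(b), all of $q+_\sfS p$, $r+_\sfS p$, $q\times_\sfS p$, $r\times_\sfS p$, $p\times_\sfS q$, $p\times_\sfS r$ are large. Hence the corresponding pairs are $\sim_\mathrm{la}$-related.

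\emph{Case $p=0_\sfS$.} Then $q+_\sfS p=q$ and $r+_\sfS p=r$, which are related by assumption. All the products equal $0_\sfS$, so those pairs are equal and therefore related.

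\emph{Case $p=1_\sfS$.} The products are $q$ and $r$, which are related. For the sums $q+_\sfS 1_\sfS$ and $r+_\sfS 1_\sfS$, Observation~\ref{lm:p-generatum-large}(b) is stated only for $q\in \rmS(X)\setminus\{0_\sfS,1_\sfS\}$. So here I would argue directly from Definition~\ref{def:large-polynomial}. If $q=[s]_E$ with $s$ containing a subterm $t_1\times(t_2\times t_3)$ or $(t_1\times t_2)\times t_3$, then $s+1$ is simple (as noted in the paper), $[s+1]_E=q+_\sfS 1_\sfS$, and $s+1$ contains the same subterm. It remains to check that $q+_\sfS 1_\sfS\notin\{0_\sfS,1_\sfS\}$. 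For this I would use that every identity in $E$, and every reduction step, preserves the property ``the term contains an occurrence of $\times$ applied to two non-unit arguments'', or more simply some invariant separating $[s+1]_E$ from $[0]_E$ and $[1]_E$. Since $0$ and $1$ are terms without operations, and no identity in $E$ can rewrite a term to a constant (each side of each $e_i$ has at least one operation symbol whenever the variables are instantiated by non-$0$/$1$ terms), Lemma~\ref{lm:approx-characterization} gives $[s+1]_E\ne[0]_E,[1]_E$. The same argument is presumably what underlies Observation~\ref{lm:p-generatum-large}(b) itself, so I would appeal to it in this form.

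The main obstacle is exactly this side condition: verifying that a large element stays outside $\{0_\sfS,1_\sfS\}$ after adding an arbitrary element. Everything else in the proof is routine case distinction.
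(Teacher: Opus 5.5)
Your proof is correct and follows the paper's route: dispose of the trivial case $q=r$, and when $q\ne r$ are both large, use Observation~\ref{lm:p-generatum-large}(b) to get compatibility with each operation, one argument at a time. You are in fact more careful than the paper, whose one-line appeal to that observation silently skips $p\in\{0_\sfS,1_\sfS\}$. Your handling of $q+_\sfS 1_\sfS$ is right, but your parenthetical justification only covers identity instances where the variables are not $0$ or $1$; the clean invariant also covers such instances and contexts containing $0$: no reduction step changes whether a term equals $0$ or $1$, so $[0]_E=\{0\}$ and $[1]_E=\{1\}$.
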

\begin{proof} It is obvious that  $\sim_\mathrm{la}$ is an equivalence relation.
Let $p, q, r \in \rmS(X)$. If $q \sim_\mathrm{la} r$, then by Observation \ref{lm:p-generatum-large}(b), we have
$p +_{\sfS} q \sim_\mathrm{la} p +_{\sfS} r$, $p \times_{\sfS} q \sim_\mathrm{la} p \times_{\sfS} r$, and  $q \times_{\sfS} p \sim_\mathrm{la} r \times_{\sfS} p$.
This proves that $\sim_\mathrm{la}$ is a congruence.
\end{proof}

Now we consider the quotient algebra 
\[ \sfS(X)/_{\sim_\mathrm{la}}=(\rmS(X)/_{\sim_\mathrm{la}}, +_{\sfS}/_{\sim_\mathrm{la}}, \times_{\sfS}/_{\sim_\mathrm{la}}, [0_\sfS]_{\sim_\mathrm{la}}, [1_\sfS]_{\sim_\mathrm{la}}) \]
of $\sfS(X)$ with respect to $\sim_\mathrm{la}$, 
and we abbreviate the above notation by $\sfM(X) =(\rmM(X),\oplus,\otimes,\0,\1)$. Moreover, for  each $q\in \rmS(X)$, we abbreviate $[q]_{\sim_\mathrm{la}}$ by $[q]_\mathrm{la}$.

The algebra $\sfM(X)$ is an almost idempotent  right-distributive strong bimonoid because $\sfS(X)$ is a strong bimonoid which satisfies the same conditions (cf. Lemma \ref{prop:SB-strong-bimonoid}),
and by Lemma \ref{identitiestofactoralgebras} all identities satisfied by $\sfS(X)$ are satisfied also by $\sfM(X)$. Now we can prove the first main result of this paper.

\begin{theorem}\label{thm:M-is-what-we-want} The strong bimonoid $\sfM(X) =(\rmM(X),\oplus,\otimes,\0,\1)$ is almost idempotent,  right-distributive, bi-locally finite, and not locally finite.
\end{theorem}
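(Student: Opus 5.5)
Almost idempotency and right-distributivity were already established just before the statement: they hold in $\sfS(X)$ by Lemma~\ref{prop:SB-strong-bimonoid} and pass to the quotient by Lemma~\ref{identitiestofactoralgebras}. Additive local finiteness then follows from Observation~\ref{obs:biloc-fin+right-disrt-loc-fin}(b). In $(\rmM(X),\oplus,\0)$, a submonoid generated by finitely many elements $a_1,\ldots,a_n$ consists of the sums $\bigoplus_i k_i a_i$. Almost idempotency gives $k a = 2a$ for all $k\ge 2$, so there are at most $3^n$ such sums.

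For multiplicative local finiteness, I take a finite set $F\subseteq \rmM(X)$ and consider products $m_1\otimes\cdots\otimes m_k$ with $m_i\in F\setminus\{\0,\1\}$. If some factor is $\0$, the product is $\0$, and factors equal to $\1$ can be dropped. For $k\ge 3$, choose representatives $p_i\in\rmS(X)\setminus\{0_\sfS,1_\sfS\}$; these exist since $[0_\sfS]_\mathrm{la}=\{0_\sfS\}$ and $[1_\sfS]_\mathrm{la}=\{1_\sfS\}$, because large elements differ from $0_\sfS,1_\sfS$ by definition. The product is then represented by $p_1\times_\sfS(p_2\times_\sfS p_3)$, possibly multiplied further. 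By Observation~\ref{lm:p-q-r-large}(a),(b) this element is large, so its class is the single class $L$ of large elements. Hence the multiplicative submonoid generated by $F$ is contained in $\{\0,\1\}\cup F\cup (F\otimes F)\cup\{L\}$, which is finite.

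The main step is showing that $\sfM(X)$ is \emph{not} locally finite. Fix $x\in X$ and consider the strong sub-bimonoid generated by $[[x]_E]_\mathrm{la}$. I define terms $s_n=(x+x)\times(x+x)$ and, more usefully, a family exhibiting unboundedly nested sum-and-product alternation without double products. Concretely, set $t_0=x$ and $t_{n+1}=(t_n+1)\times x$ or a similar pattern, and let $p_n=[t_n]_E$. The claim to prove is that the $p_n$ are pairwise distinct, not large, and not $0_\sfS,1_\sfS$; their classes in $\rmM(X)$ are then pairwise distinct and all lie in the generated sub-bimonoid. I must choose the pattern so that right-distributivity $e_5$ cannot turn $t_n$ into something with a product of three factors. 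With $(t_n+1)\times x$, $e_5$ rewrites to $t_n\times x + x$, and recursively to $((t_{n-1}+1)\times x)\times x+x$, which contains a triple product, so that pattern is large. A safer choice puts the sum on the right: $t_{n+1}=x\times(t_n+1)$. Since left-distributivity is absent, $e_5$ cannot fire at the top, and $e_3$ needs a product inside a product.

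The hard part is the invariant argument via Lemma~\ref{lm:approx-characterization}. I would define a property $P$ of simple terms, roughly that every $\times$-node has a left child that is a variable or a sum and a right child that is a sum or a variable, with no $\times$ directly under a $\times$. I would also define a quantity $d(t)$, the maximal nesting depth of $\times$ along paths. I then show that $P$ and $d$ are invariant under single reduction steps $\Leftrightarrow_E$ applied within $\sfST_\Sigma(X)$. Such a step can only be $e_1,e_2,e_4$ or their inverses, or $e_5$ or its inverse, applied where a left factor is a sum. One subtlety is that the $\sfST$ operations absorb $0,1$, which must be handled in the case analysis. Since $e_3$ cannot apply to a $P$-term, and the inverse of $e_5$ only recombines sums, no step creates a nested product. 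Then $d$ separates the $p_n$, and every term in the $\approx_E$-class of $t_n$ satisfies $P$, so no representative contains a triple product and $p_n$ is not large. The delicate case analysis of $e_5^{-1}$ and of interaction with $1$ is where I expect the most work.
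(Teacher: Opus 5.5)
Your arguments for almost idempotency, right-distributivity and bi-local finiteness are correct and match the paper's. Your final family $t_{n+1}=x\times(t_n+1)$ is, up to commuting the inner sum, exactly the paper's.

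The gap is in the invariant for the non-local-finiteness step: your property $P$ is not preserved by $\Leftrightarrow_E$. Since $P$ allows a $\times$-node whose left child is a sum, and that sum may contain a product, $e_5$ can create a nested product. For instance, $(x\times x+x)\times x$ is simple and satisfies $P$. One $e_5$-step (with $z_1\mapsto x\times x$, $z_2\mapsto x$, $z_3\mapsto x$) produces $(x\times x)\times x+x\times x$, which has a subterm of the form in Definition~\ref{def:large-polynomial}(b). So the claim that no step from a $P$-term creates a nested product is false. This is the very $e_5$ mechanism you identified for the rejected pattern $(t_n+1)\times x$, and $P$ does not exclude it. Once nested products occur, $e_3^{\pm1}$ becomes applicable and changes $d$ as well: $(x\times x)\times(x\times x)\Rightarrow_{\{e_3\}}((x\times x)\times x)\times x$ raises $d$ from $2$ to $3$. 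So the separation argument collapses too.

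The invariant has to be tailored to the $\approx_E$-class of $t_n$, not to a broad syntactic class. Let $O_n$ be the set of the $2^n$ terms obtained from $t_n$ by commuting some of its $n$ sums. Then verify that every reduction step starting in $O_n$, for arbitrary contexts and arbitrary assignments into $\rmST(X)$, ends in $O_n$:
\begin{itemize}
\item In a term of $O_n$, every product has left factor $x$ and a sum as right factor. So $e_5$ and $e_3^{\pm1}$ have only trivial matches.
\item Every sum has the summands $1$ and a term of some $O_k$ with $k<n$, i.e.\ $x$ or a product with left factor $x$. So $e_1^{\pm1}$ and $e_4^{\pm1}$ have only trivial matches.
\item The same holds for $e_5^{-1}$, including its instances with $z_1$ or $z_2$ mapped to $1$, which match sums of the form $u+s\times u$. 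These only match trivially, essentially with $z_3\mapsto 1$.
\item Because the operations of $\sfST_\Sigma(X)$ absorb $0$ and $1$, also check that for each identity $(\ell,r)\in E\cup E^{-1}$, $\varphi(\ell)\in\{0,1\}$ forces $\varphi(r)=\varphi(\ell)$. Hence such instances are identity steps.
\end{itemize}
Thus only $e_2$ acts nontrivially, which is precisely the paper's (tersely stated) claim. Every representative of $p_n$ then lies in $O_n$, so $p_n$ is not large. Either your depth $d$ (equal to $n$ on all of $O_n$) or the paper's count $|p_n|=2^n$ then shows that the $p_n$ are pairwise distinct.
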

\begin{proof} By Observation \ref{obs:biloc-fin+right-disrt-loc-fin}(b),  $\sfM(X)$  is additively locally finite.  We show that it  is also multiplicatively locally finite as follows.
Fix any  $q \in \rmS(X) \setminus \{ 0_\sfS, 1_\sfS\}$, and let   $p', q', r' \in \rmS(X) \setminus \{ 0_\sfS, 1_\sfS\}$.
Then  $p' \times_{\sfS} (q' \times_{\sfS} r') \sim_\mathrm{la} q \times_{\sfS} (q \times_{\sfS} q)$ because, by Observation \ref{lm:p-q-r-large}(a), both products are large. 
It follows that if  $F$  is a finite subset of $\rmM(X)$,
then the multiplicative submonoid  $\langle F\rangle_{\{\otimes,\1\}}$  of $(\rmM(X), \otimes, \1)$  generated by  $F$ contains $F \cup \{ \1\}$,
all binary products of elements of  $F$  and, possibly, $[q]_\mathrm{la}\otimes ([q]_\mathrm{la}\otimes [q]_\mathrm{la})$.
Thus $\langle F\rangle_{\{\otimes,\1\}}$ is finite. Hence, $\sfM(X)$  is bi-locally finite. 

It remains to show that  $\sfM(X)$ is not locally finite. 
For this, choose an $x \in X$. We define, for each  $n \in \mathbb{N}$,  the term  $t_n \in \rmST(X)$  by induction as follows: $t_0 = x$  and  $t_{n+1} = x \times (1 + t_n)$. So, e.g.,  $t_1 = x \times (1 + x)$  and $t_2 = x \times (1 + t_1) =  x \times (1 + (x \times (1 + x)))$ 

Next, for each $n \in \mathbb{N}$, we consider the $\approx_{E}$-class which contains the term $t_n$, i.e., we
let  $p_n = [t_n]_{E}$. Then $p_n \in \rmS(X)$, and we have $p_0 = [x]_{E}$  and  $p_{n+1} = [x]_{E} \times_\sfS (1_\sfS +_\sfS p_n)$  for  each $n \in \mathbb{N}$.

Now we show that, for each  $n \in \mathbb{N}$,  $p_n$  is not large, i.e., there does not exist $t\in [t_n]_{E}$ such that $t$ has the property described in 
Definition \ref{def:large-polynomial}(b). For this, let $t\in [t_n]_{E}$.
By Lemma \ref{lm:approx-characterization} (with $\sfA=\sfST_\Sigma(X)$), we have $t_n \Leftrightarrow^*_{E} t$, i.e., $t_n$ can be transformed to $t$ in finitely many reduction steps  using in each step an identity in $E$ or its inverse.
However, due to the special shape of $t_n$, in each reduction step of the transformation only identity $e_2$ (commutativity of addition) can be used. 
Hence $t$ cannot be in the form described in Definition~\ref{def:large-polynomial}(b), which means that  $p_n$  is not large.

It also follows that the congruence class $p_n$ contains  $2^n$  elements, because there are $n$ occurrences of $+$ in $t_n$, and we can apply the identity $e_2$ in $n$ instances that are independent; thus, there are $2^n$ different elements in the class $p_n$. Consequently, we also obtain that for every $m,n \in \mathbb{N}$ with $m \neq n$, we have $p_m \neq p_n$.

Now for each  $n \in \mathbb{N}$, we let $a_n = [p_n]_{\mathrm{la}} \in \rmM(X)$. Then
$a_{n+1} = [[x]_{E}]_\mathrm{la} \otimes (\1 \oplus  a_n)$  for each  $n \in \mathbb{N}$,
so  $\{a_n \mid n\in \mathbb{N}\} \subseteq \langle \{ \1, a_0 \} \rangle_{\{\oplus, \otimes\}}$. Finally, let  $m, n \in \mathbb{N}$  with  $m \neq n$. Since  also both  $p_m$  and  $p_n$ are not large, we have $p_m \not\sim_\mathrm{la} p_n$, showing  $a_m \neq a_n$,
i.e., the set $\{a_n \mid n\in \mathbb{N}\}$ is infinite.
Consequently,  $\langle \{ \1, a_0 \} \rangle_{\{\oplus, \otimes\}}$  is infinite,
showing that  $\sfM(X)$  is not locally finite. 
\end{proof}

\section{Weighted tree automata}
\label{sect:wta-def}

\emph{In this section, let $\Sigma$ be an arbitrary ranked alphabet and  $\B=(B,\oplus,\otimes,\0,\1)$ be an arbitrary  strong bimonoid, if not stated otherwise.}

A \emph{$(\Sigma,\B)$-weighted tree automaton} (for short: $(\Sigma,\B)$-wta, or simply: wta) is a tuple $\cA=(Q,\delta,F)$ 
 where $Q$ is a finite non-empty set (\emph{states}),
 $\delta=(\delta_k\mid k\in\mathbb{N})$ is a family of mappings $\delta_k: Q^k\times \Sigma^{(k)}\times Q \to B$ (\emph{transition mappings})  where we consider $Q^k$ as set of words over $Q$ of length $k$, and 
 $F: Q \rightarrow B$ is a mapping (\emph{root weight vector}). We denote by $\mathrm{wts}(\cA)$ the set of all weights occurring in $\cA$, i.e., $\mathrm{wts}(\cA) = \bigcup_{k \in \mathbb{N}} \im(\delta_k) \cup \im(F)$.

 Let  $\cA=(Q,\delta,F)$ be a $(\Sigma,\B)$-wta. The {\em vector algebra of $\cA$} is the $\Sigma$-algebra $\V(\cA)=(B^Q,\delta_\cA)$ where, for every $k \in \mathbb{N}$, $\sigma \in \Sigma^{(k)}$, the $k$-ary operation $\delta_\cA(\sigma): B^Q \times \cdots \times B^Q \to B^Q$ is defined by 
\begin{equation}\label{eq:delta-A-definition}
\delta_\cA(\sigma)(v_1,\dots,v_k)_q 
  = \bigoplus_{q_1\cdots q_k \in Q^k} \Big(\bigotimes_{i\in[k]} (v_i)_{q_i}\Big) \otimes \delta_k(q_1\cdots q_k,\sigma,q)
  \end{equation}
  for every $v_1,\dots,v_k \in B^Q$ and $q \in Q$,
where, for each family $(b_i \mid i \in [k])$, the expression $\bigotimes_{i \in [k]} b_i$ stands for $b_1 \otimes \ldots \otimes b_k$.  Thus $\delta_\cA(\alpha)()_q = \delta_0(\varepsilon,\alpha,q)$  for each $\alpha \in \Sigma^{(0)}$, because $Q^0=\{\varepsilon\}$ and $\bigotimes_{i\in \emptyset} (v_i)_{q_i}=\1$.

Since the $\Sigma$-term algebra $\sfT_\Sigma$ is initial, there exists a unique $\Sigma$-algebra homomorphism from $\sfT_\Sigma$ to the vector algebra $\V(\cA)$; we denote it by $\h_\cA$. Then, for every $t = \sigma(t_1,\ldots,t_k)$ in $\T_\Sigma$ and $q\in Q$, we have
\begin{align*}
\h_\cA(\sigma(t_1,\ldots,t_k))_q &= \h_\cA(\theta_\Sigma(\sigma)(t_1,\ldots,t_k))_q = \delta_{\cA}(\sigma)
(\h_\cA(t_1),\ldots, \h_\cA(t_k))_q \\
&= \bigoplus_{q_1 \cdots q_k \in Q^k} \Big( \bigotimes_{i\in [k]} \h_\cA(t_i)_{q_i}\Big) \otimes \delta_k(q_1\cdots q_k,\sigma,q),
\end{align*}
where $\theta_\Sigma(\sigma)$ is the operation of the $\Sigma$-term algebra associated to $\sigma$; the second equality holds, because $\h_\cA$ is a $\Sigma$-algebra homomorphism. In particular, for each $\alpha \in \Sigma^{(0)}$, we have $\h_\cA(\alpha)_q = \delta_0(\varepsilon,\alpha,q)$.

The \emph{initial algebra semantics of $\cA$}, denoted by $\initialsem{\cA}$, is the weighted tree language $\initialsem{\cA}: \T_\Sigma \rightarrow B$  defined for every $t\in \T_\Sigma$ by  
\begin{equation*}
  \initialsem{\cA}(t) = \bigoplus_{q \in Q} \h_\cA(t)_q \otimes F_q\enspace. 
\end{equation*}

Next we recall the run semantics of the $(\Sigma,\B)$-wta $\cA$.
Let $t \in \T_\Sigma$. We define the \emph{set of positions of $t$}, denoted by $\pos(t)$, by structural induction as follows: (i) For each $t \in \Sigma^{(0)}$, we let $\pos(t) =\{\varepsilon\}$ and (ii) for every $k \in \mathbb{N}_+$, $\sigma \in \Sigma^{(k)}$, and $t_1,\ldots,t_k \in \T_\Sigma$, we let $\pos(\sigma(t_1,\ldots,t_k)) = \{\varepsilon\} \cup \bigcup_{i \in [k]} \{iw \mid w \in \pos(t_i)\}$. In particular, $\pos(t) \subseteq (\mathbb{N}_+)^*$.

Then, for every $t = \sigma(t_1,\ldots,t_k)$ in $\T_\Sigma$, a~\emph{run of $\cA$ on $t$} is a mapping $\rho: \pos(t) \rightarrow Q$. The \emph{set of all runs of $\cA$ on $t$} is denoted by $\R_\cA(t)$. Next we define the mapping $\wt_\cA: \mathrm{TR} \to B$ by structural induction on \(\mathrm{TR} = \{(t,\rho) \mid t \in \T_\Sigma, \rho \in \R_\cA(t)\}\), for every $t = \sigma(t_1,\ldots,t_k)$ in $\T_\Sigma$ and $\rho \in \R_\cA(t)$, as follows: 
 \begin{equation}\label{equ:weight-of-run}
\wt_\cA(t,\rho) = \Big( \bigotimes_{i\in [k]} \wt_\cA(t_i,\rho_i)\Big) \otimes \delta_k\big(\rho(1) \cdots \rho(k),\sigma,\rho(\varepsilon)\big) \enspace,
\end{equation}
where, for each $i \in [k]$, the run $\rho_i: \pos(t_i) \to Q$ of $\cA$ on $t_i$ is defined, for each $w \in \pos(t_i)$, by $\rho_i(w) = \rho(iw)$.

The {\it run semantics of $\cA$}, denoted by $\runsem{\cA}$, is the weighted tree language $\runsem{\cA}:~\T_\Sigma~\rightarrow~B$ such that,  for each $t \in \T_\Sigma$, we let
\begin{equation*}
  \runsem{\cA}(t) = \bigoplus_{\rho \in \R_\cA(t)}\wt_\cA(t,\rho) \otimes F_{\rho(\varepsilon)}\enspace. 
\end{equation*}

In general, the initial algebra semantics of $\cA$ is different from the run semantics of $\cA$, cf. e.g., \cite[Ex.~6.2.2-6.2.4]{fulvog26} and also Example~\ref{ex:Maps-wta}. However, the following equivalence is known.

\begin{theorem}{\rm (cf. \cite[Thm.~4.1]{rad10}, \cite[Lm. 4.1.13]{bor04b}, and \cite[Thm.~6.3.5]{fulvog26}; for weighted string automata cf. \cite[Lm.~4]{drostuvog10})} \label{thm:semiring-run=initial} Let $\Sigma$ be a ranked alphabet. Moreover, let $\B=(B,\oplus,\otimes,\0,\1)$ be a strong bimonoid. The following two statements are equivalent:
\begin{compactenum}
\item[(A)] If $\Sigma\not=\Sigma^{(0)}$, then $\B$ is right-distributive, and if $\Sigma\not= \Sigma^{(0)} \cup \Sigma^{(1)}$, then $\B$ is left-distributive.
\item[(B)] For each $(\Sigma,\B)$-weighted tree automaton $\cA$ we have $\runsem{\cA} = \initialsem{\cA}$.
\end{compactenum}
Hence, if $\B$ is a semiring, then $\initialsem{\cA}=\runsem{\cA}$ for each $(\Sigma,\B)$-weighted tree automaton $\cA$.
\end{theorem}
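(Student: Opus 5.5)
The plan is to prove the two implications separately. For (A)$\Rightarrow$(B) I would relate $\h_\cA$ to runs by structural induction. For (B)$\Rightarrow$(A) I would build small counterexample automata from a failure of distributivity.

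\textbf{(A)$\Rightarrow$(B).} The key claim is that for every $t\in\T_\Sigma$ and $q\in Q$,
\[
\h_\cA(t)_q=\bigoplus_{\rho\in\R_\cA(t),\ \rho(\varepsilon)=q}\wt_\cA(t,\rho).
\]
\begin{compactitem}
\item If $\Sigma=\Sigma^{(0)}$, every tree is a leaf $\alpha$. It has exactly one run per root state, so the claim is immediate, and both semantics equal $\bigoplus_q \delta_0(\varepsilon,\alpha,q)\otimes F_q$.
\item Otherwise, the inductive step for $t=\sigma(t_1,\ldots,t_k)$ substitutes the induction hypothesis into the formula for $\h_\cA(\sigma(t_1,\ldots,t_k))_q$. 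This gives $\bigoplus_{q_1\cdots q_k}\big(\bigotimes_i\bigoplus_{\rho_i}\wt_\cA(t_i,\rho_i)\big)\otimes\delta_k(q_1\cdots q_k,\sigma,q)$.
\item For $k=1$ only right-distributivity is needed to pull $\otimes\,\delta_1(\cdot)$ inside the sum.
\item For $k\ge 2$ we must also expand the product of sums $\bigotimes_i\bigoplus_{\rho_i}$. This uses left- and right-distributivity, via a routine induction on $k$. Runs on $t$ with root state $q$ correspond bijectively to tuples $(q_1,\ldots,q_k,\rho_1,\ldots,\rho_k)$, which closes the induction.
\item Finally, $\initialsem{\cA}(t)=\bigoplus_q\big(\bigoplus_{\rho(\varepsilon)=q}\wt_\cA(t,\rho)\big)\otimes F_q$. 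One more application of right-distributivity turns this into $\runsem{\cA}(t)$.
\end{compactitem}
The only care needed is to track exactly which distributive law is used at which rank, matching the hypotheses in (A).

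\textbf{(B)$\Rightarrow$(A), right-distributivity.} Suppose $\Sigma\ne\Sigma^{(0)}$ and $(a\oplus b)\otimes c\ne (a\otimes c)\oplus(b\otimes c)$. Pick $\gamma\in\Sigma^{(k)}$ with $k\ge1$, $\alpha\in\Sigma^{(0)}$, and states $q_1,q_2,f,p$.
\begin{compactitem}
\item Set $\delta_0(\varepsilon,\alpha,q_1)=a$, $\delta_0(\varepsilon,\alpha,q_2)=b$, $\delta_0(\varepsilon,\alpha,f)=\1$.
\item Set $\delta_k(q_1f\cdots f,\gamma,p)=\delta_k(q_2f\cdots f,\gamma,p)=\1$.
\item All other transition weights are $\0$; put $F_p=c$ and $F$ zero elsewhere.
\end{compactitem}
For $t=\gamma(\alpha,\ldots,\alpha)$ we get $\h_\cA(t)_p=a\oplus b$, hence $\initialsem{\cA}(t)=(a\oplus b)\otimes c$. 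The only nonzero runs give $\runsem{\cA}(t)=(a\otimes c)\oplus(b\otimes c)$, contradicting (B).

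\textbf{(B)$\Rightarrow$(A), left-distributivity.} Suppose $\sigma\in\Sigma^{(k)}$ with $k\ge2$ and $a\otimes(b\oplus c)\ne(a\otimes b)\oplus(a\otimes c)$. Reuse the gadget above with $\sigma$ in place of $\gamma$ and weights $b,c$, so that $s=\sigma(\alpha,\ldots,\alpha)$ has $\h_\cA(s)_p=b\oplus c$.
\begin{compactitem}
\item Add states $q_0,r$ with $\delta_0(\varepsilon,\alpha,q_0)=a$ and $\delta_k(q_0\,p\,f\cdots f,\sigma,r)=\1$.
\item Set $F_r=\1$; $F$ and all remaining transitions are $\0$.
\end{compactitem}
For $t=\sigma(\alpha,s,\alpha,\ldots,\alpha)$ the initial algebra semantics yields $a\otimes(b\oplus c)$, while the run semantics yields $(a\otimes b)\oplus(a\otimes c)$.

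The main obstacle is checking that no unintended nonzero runs or vector entries arise. Examples are state $q_0$ appearing below $s$, or $p$ appearing at the root level. All such combinations have transition weight $\0$, so they contribute $\0$ because $\0$ is annihilating. The final statement for semirings follows directly from (A)$\Rightarrow$(B).
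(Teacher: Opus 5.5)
Your proof is correct and is essentially the standard argument: for (A)$\Rightarrow$(B) you prove $\h_\cA(t)_q=\bigoplus_{\rho\in\R_\cA(t),\ \rho(\varepsilon)=q}\wt_\cA(t,\rho)$ by structural induction, using at each rank only the distributive laws that (A) provides, and for (B)$\Rightarrow$(A) you give small counterexample automata. The paper itself does not prove this theorem; it only refers to the cited sources, and your argument follows the same route as those.
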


Essentially, a weighted automaton over words in $\Gamma^*$ (for an alphabet $\Gamma$) is a wta over the string ranked alphabet $\Gamma_e = \Gamma_e^{(0)} \cup \Gamma_e^{(1)}$ where $\Gamma_e^{(0)}=\{e\}$ for some $e \not\in \Gamma$, and $\Gamma_e^{(1)}= \Gamma$. The vector $(\delta(\varepsilon,e,q) \mid q \in Q)$ forms the initial weight vector. For a detailed explanation we refer to \cite[p.~324]{fulvog09new} and \cite[Sec.~4.1]{fulvog26}. We finish this section with an example of a weighted tree automaton over a ranked alphabet which contains a binary symbol (i.e., a symbol with rank 2) and over a right-distributive and not left-distributive strong bimonoid; for this weighted tree automaton the two semantics are different.

\begin{example}\rm \label{ex:Maps-wta} Let $\mathrm{Maps}(\mathbb{N})$ be the strong bimonoid of Example \ref{ex:maps-N}. Recall that it is right-distributive and not left-distributive.  We let $\Sigma = \{\sigma^{(2)}, \gamma^{(1)},\alpha^{(0)}\}$. We consider the $(\Sigma,\mathrm{Maps}(\mathbb{N}))$-wta $\cA=(Q,\delta,F)$ where $Q= \{q,p_1,p_2,p,q_f\}$, $F_{q_f}=\id$ and $F_q=F_{p_1}= F_{p_2} = F_p= \widetilde{0}$, and
    \begin{align*}
      \delta_0(\varepsilon,\alpha,q)=\mathrm{sq} \ \ \ \  \ \ & \delta_0(\varepsilon,\alpha,p_1)=\id  \ \ \ \ \ \ \delta_0(\varepsilon,\alpha,p_2)=\id  \\
      \delta_1(p_1,\gamma,p) = \id \ \ \ \  \ \ & \delta_1(p_2,\gamma,p) = \id  \ \ \ \ \ \ \delta_2(qp,\sigma,q_f) = \id
    \end{align*}
    and for each other transition the weight is $\widetilde{0}$. Let $\xi=\sigma(\alpha,\gamma(\alpha))$. Then we have:
  \begin{align*}
    \initialsem{\cA}(\xi)
    &= \h_\cA(\sigma(\alpha,\gamma(\alpha)))_{q_f} \circ \id
    = \h_\cA(\alpha)_q \circ \h_\cA(\gamma(\alpha))_p \circ \id\\
    &= \mathrm{sq} \circ (\h_\cA(\alpha)_{p_1} \circ \id + \h_\cA(\alpha)_{p_2} \circ \id)
    = \mathrm{sq} \circ (\id + \id) \enspace.
    \end{align*}
Let $\rho_1$ and $\rho_2$ be the following runs of $\cA$ on $\xi$:
$\rho_1(\varepsilon)=q_f$, $\rho_1(1)=q$, $\rho_1(2)=p$, and $\rho_1(21)=p_1$,
and $\rho_2$ is the same as $\rho_1$ except that $\rho_2(21)=p_2$.
Then
\begin{align*}
  \runsem{\cA}(\xi)= \wt_\cA(\xi,\rho_1) \circ \id + \wt_\cA(\xi,\rho_2) \circ \id
  = \mathrm{sq} \circ \id \circ \id \circ \id + \mathrm{sq} \circ \id \circ \id \circ \id
  = \mathrm{sq} + \mathrm{sq} \enspace.
\end{align*}
Since $\mathrm{sq} \circ (\id + \id) \ne \mathrm{sq} + \mathrm{sq}$, we have $\initialsem{\cA} \ne \runsem{\cA}$.
\hfill $\Box$
\end{example}

 %%%%%%%%%%%%%%%%%%%%%%%%%%%%%%%%%%%%%%%%%%%%%%%%

\section{Universality property of initial algebra semantics }\label{sect:wta-section}

In this section we will prove Main Theorem \ref{thm:main-result-on-Sigma}. That is, if the ranked alphabet $\Sigma$ contains a symbol with rank at least 2, then for each finite subset $A\subseteq B$, we can construct a $(\Sigma,\B)$-wta $\cA$ such that the image of its initial algebra semantics is equal to the closure of $A$, i.e., $\im(\initialsem{\cA}) = \langle A \rangle_{\{\oplus,\otimes,\0,\1\}}$ (cf. Theorem \ref{lm:closure-of-finite-set-i-recognizable}).
Weaker versions of this result were proved in \cite[Lm.~6.1]{rad10} (under the assumption that $\Sigma$ contains at least $|A\cup\{\0,\1\}|$ many nullary symbols and at least two binary symbols) and also in \cite[Lm.~12]{fulvog23} (there $\B$ is a bounded lattice and idempotency and commutativity are used in the proof).
After \cite{drofultepvog24} was published, Theorem \ref{lm:closure-of-finite-set-i-recognizable-stronger} was proved in \cite[Thm. 5.5.2]{fulvog26}.

Our proof of Theorem \ref{lm:closure-of-finite-set-i-recognizable-stronger} proceeds  as follows.
First, we prove a weaker version in which the existence of a binary symbol is required (cf.  Theorem \ref{lm:closure-of-finite-set-i-recognizable}). Then, by means of an easy padding lemma (cf. Lemma~\ref{lm:padding}) we lift Theorem \ref{lm:closure-of-finite-set-i-recognizable} to the stronger 
Theorem \ref{lm:closure-of-finite-set-i-recognizable-stronger}.

In the usual way, each value $a \in \langle A \rangle_{\{\oplus,\otimes,\0,\1\}}$ can be represented by an expression $t$, using the elements of $A \cup \{\0,\1\}$ as nullary symbols and $\oplus$ and $\otimes$ as binary symbols. The canonical mapping $\mathrm{eval}$ maps such an expression $t$ to its unique value in $B$.
Now let $\sigma \in \Sigma^{(2)}$ and $\alpha \in \Sigma^{(0)}$.
We will encode each $a_i \in A \cup \{\0,\1\}$ by a left-branching comb $f_i = \sigma(... \sigma(\alpha,\alpha)...,\alpha)$ with $i$ occurrences of $\sigma$. Also, we will encode $\oplus$ and $\otimes$ by the tree patterns $\sigma(\alpha,\sigma(.,.))$ and $\sigma(\sigma(.,.),\alpha)$, respectively. Let $g(t)$ be this encoding of an expression $t$. Then we construct $\cA$ such that $\initialsem{\cA}(g(t))=\mathrm{eval}(t)$. 

Formally, we assume that $A \cup \{\0,\1\} = \{a_1,\ldots,a_n\}$. We define the ranked alphabet $\Delta =\{\oplus^{(2)}, \otimes^{(2)}\} \cup \{a_1^{(0)},\ldots,a_n^{(0)}\}$ and the mapping $\mathrm{eval}: \T_\Delta \to \langle A \rangle_{\{\oplus,\otimes,\0,\1\}}$ by structural induction, for each $t \in \T_\Delta$, by:
\[
  \mathrm{eval}(t) = \begin{cases} a_i & \text{ if $t = a_i$ for some $i \in [n]$}\\
    \mathrm{eval}(t_1) \oplus \mathrm{eval}(t_2) & \text{ if $t = \oplus(t_1,t_2)$ for some $t_1,t_2 \in \T_\Delta$}\\
    \mathrm{eval}(t_1) \otimes \mathrm{eval}(t_2) & \text{ if $t = \otimes(t_1,t_2)$ for some $t_1,t_2 \in \T_\Delta$}\enspace.
    \end{cases}
  \]
  Clearly, $\mathrm{eval}$ is surjective. Thus, each value in $\langle A \rangle_{\{\oplus,\otimes,\0,\1\}}$ is represented by at least one tree in $\T_\Delta$.
  
Next we fix arbitrary $\alpha \in \Sigma^{(0)}$ and  $\sigma \in \Sigma^{(2)}$. For each $i \in [0,n]$ we define the $\Sigma$-tree $f_i$ by $f_0 = \alpha$ and $f_{i+1} = \sigma(f_i,\alpha)$ for each $i \in [n-1]$. Then, by structural induction, we define the mapping $g: \T_\Delta \to \T_\Sigma$, for each $t\in \T_\Delta$, as follows:
\begin{compactitem}
\item if $t=a_i$ for some $i\in[n]$, then $g(t)=f_i$,
\item if $t=\oplus(t_1,t_2)$, then $g(t)=\sigma(\alpha,\sigma(g(t_1),g(t_2)))$, and
\item  if $t=\otimes(t_1,t_2)$, then $g(t)= \sigma(\sigma(g(t_1),g(t_2)),\alpha)$.
\end{compactitem}
In fact, $g$ is a $(\Delta,\Sigma)$-tree homomorphism in the sense of \cite[Def.~3.62]{eng75-15},
and we can consider it as a coding. 

Then we construct the $(\Sigma,\B)$-wta $\cA$ such that $\initialsem{\cA}(g(t)) = \mathrm{eval}(t)$ for each $t \in \T_\Delta$. Thus, since   $\mathrm{eval}$ is surjective, each element of $\langle A \rangle_{\{\oplus,\otimes,\0,\1\}}$ occurs in the image of $\initialsem{\cA}$.

\begin{figure}[t]
  \centering
  
\begin{tikzpicture}[scale=0.8]
% nodes bottom
\tikzset{node distance=7em, scale=0.4, transform shape}
\node[state, rectangle] (b) {\huge $\alpha$};
  \node[state, right of=b] (q0) {\huge $q_0$};
\node[state, rectangle, right of=q0](g1-1) {\huge $\sigma$};
  \node[state, right of=g1-1] (q1) {\huge $q_1$};
\node[state, rectangle, right of=q1] (g1-2) {\huge $\sigma$};
  \node[state, right of=g1-2] (q2) {\huge $q_2$};
\node[state, right of=q2,opacity=0] (empty) {}; % empty node
\node[state, rectangle, right of=empty]  (g1-3) {\huge $\sigma$};
\node[state, right of=g1-3] (qn-1) {\huge $q_{n-1}$};

% nodes bottom plus one
\node[state, rectangle,above of=q0] (h0) {\huge $\sigma$};
\node[state, rectangle,above of=q1] (h1) {\huge $\sigma$};
\node[state, rectangle,above of=q2] (h2) {\huge $\sigma$};
\node[state, rectangle,above of=qn-1] (hn-1) {\huge $\sigma$};

% weights bottom plus one
\tikzset{node distance=3em}
\node[left of=h0]     {\LARGE $a_1$};
\node[left of=h1]     {\LARGE $a_2$};
\node[left of=h2]     {\LARGE $a_3$};
\node[left of=hn-1]     {\LARGE $a_n$};

% edges from bottom to bottom plus one
\draw (q0)    edge[->,>=stealth, out=300, in=350] (h0);
\draw (q0)    edge[->,>=stealth, out=90, in=270] (h0);

\draw (q0)    edge[->,>=stealth, out=290, in=350,looseness=1.8] (h1);
\draw (q1)    edge[->,>=stealth, out=90, in=270] (h1);

\draw (q0)    edge[->,>=stealth, out=280, in=350, looseness=1.5] (h2);
\draw (q2)    edge[->,>=stealth, out=90, in=270] (h2);

\draw (q0)    edge[->,>=stealth, out=270, in=350, looseness=1.3] (hn-1);
\draw (qn-1)    edge[->,>=stealth, out=90, in=270] (hn-1);

% ellipsis
\node (dots) at ($(q2.east)!0.5!(empty.east)$) {$\ldots$};

% weights
\tikzset{node distance=2em}
\node[below of=b]    (wb)    {\LARGE $\1$};
\node[below of=g1-1] (wg1-1) {\LARGE $\1$};
\node[below of=g1-2] (wg1-2) {\LARGE $\1$};
\node[below of=g1-3] (wg1-3) {\LARGE $\1$};
(
% edges ->
\draw (b)    edge[->,>=stealth] (q0);
\draw (q0)   edge[->,>=stealth, out=0, in=180] (g1-1);
\draw (q0)   edge[->,>=stealth, out=-20, in=210] (g1-1);
\draw (g1-1) edge[->,>=stealth] (q1);
\draw (q1)   edge[->,>=stealth, out=0, in=180] (g1-2);
\draw (q0)   edge[->,>=stealth, out=-30, in=210] (g1-2);
\draw (g1-2) edge[->,>=stealth] (q2);
\draw (empty)edge[->,>=stealth] (g1-3);
\draw (g1-3) edge[->,>=stealth] (qn-1);
\draw (q0)   edge[->,>=stealth, out=-40, in=210] (g1-3);

% state v
\node[state, above=5cm of g1-2] (v) {\huge $v$};
\draw (h0)    edge[->,>=stealth] (v);
\draw (h1)    edge[->,>=stealth] (v);
\draw (h2)    edge[->,>=stealth] (v);
\draw (hn-1)    edge[->,>=stealth] (v);
\node[below of=v, yshift=-3mm]     {\LARGE $\1$};

% upper part left
\node[state, rectangle,above=6cm of v,xshift=-6cm] (sigma1) {\huge $\sigma$};
\node[state,above=1cm of sigma1] (state1) {\huge $q_{\mathrm{one}}$};
\draw (state1) edge[->,>=stealth,out=200, in=250,looseness=1.7]  (sigma1);
\draw (state1) edge[->,>=stealth,out=340, in=290,looseness=1.6]  (sigma1);
\draw (sigma1) edge[->,>=stealth]  (state1);
\tikzset{node distance=3em}
\node[below of=sigma1]     {\LARGE $\1$};

\node[state, rectangle,left=2cm of state1] (sigmaleftleft) {\huge $\sigma$};
\draw (v) edge[->,>=stealth,out=140, in=270]  (sigmaleftleft);
\draw (state1) edge[->,>=stealth]  (sigmaleftleft);
\node[left of=sigmaleftleft]     {\LARGE $\1$};

\node[state, rectangle,right=2cm of state1] (sigmaleftright) {\huge $\sigma$};
\draw (v) edge[->,>=stealth,out=110, in=270]  (sigmaleftright);
\draw (state1) edge[->,>=stealth]  (sigmaleftright);
\node[right of=sigmaleftright]     {\LARGE $\1$};

\node[state, rectangle,above=1cm of state1] (alpha1) {\huge $\alpha$};
\draw (alpha1) edge[->,>=stealth] (state1);
\node[above of=alpha1,yshift=-2mm]     {\LARGE $\1$};

\node[state,above=1.5cm of alpha1] (oplus) {\huge $q_\oplus$};
\draw (sigmaleftleft) edge[->,>=stealth,out=90,in=210] (oplus);
\draw (sigmaleftright) edge[->,>=stealth,out=90,in=330] (oplus);

\node[state,rectangle,above=2cm of oplus] (sigmatopleft) {\huge $\sigma$};
\draw (sigmatopleft) edge[->,>=stealth,out=130,in=180, looseness=2.2] (v);
\node[state,left=2cm of oplus] (q0left) {\huge $q_0$};
\draw (q0left) edge[->,>=stealth] (sigmatopleft);
\draw (oplus) edge[->,>=stealth] (sigmatopleft);
\node[above of=sigmatopleft, xshift=0.1cm]     {\LARGE $\1$};

% upper part right
\node[state, rectangle,above=10cm of v,xshift=8cm] (sigma2) {\huge $\sigma$};
\node[right=3mm of sigma2] {\Large $\1$};
\node[state,above=1cm of sigma2] (otimes) {\huge $q_\otimes$};
\draw (v) edge[->,>=stealth,out=60,in=240] (sigma2);
\draw (v) edge[->,>=stealth,out=40,in=280] (sigma2);
\draw (sigma2) edge[->,>=stealth] (otimes);
\node[state, rectangle, above=2cm of otimes] (sigmatopright) {\huge $\sigma$}; 
\draw (otimes) edge[->,>=stealth] (sigmatopright);
\draw (sigmatopright) edge[->,>=stealth,out=50,in=10,looseness=2.1] (v);
\node[right=3mm of sigmatopright] {\Large $\1$};
\node[state,right=1cm of otimes] (q0right) {\huge $q_0$};
\draw (q0right) edge[->,>=stealth] (sigmatopright);
\end{tikzpicture}  

\vspace{-10mm}

  \caption{\label{fig:closure-by-initsem} The $(\Sigma,\B)$-wta $\cA$ of the proof of Theorem \ref{lm:closure-of-finite-set-i-recognizable}, where the three occurrences of the state $q_0$ have to be identified. }
\end{figure}

\begin{theorem}\label{lm:closure-of-finite-set-i-recognizable}  Let $\Sigma$ be a ranked alphabet which contains a binary symbol. Moreover, let $\B=(B,\oplus,\otimes,\0,\1)$ be a strong bimonoid and $A \subseteq B$  be a finite subset. Then we can construct a  $(\Sigma,\B)$-weighted tree automaton  $\cA$  such that  $\im(\initialsem{\cA}) = \langle A \rangle_{\{\oplus,\otimes,\0,\1\}}$. In particular, if  $\B$  is generated by  $A$, then we obtain  $\im(\initialsem{\cA}) = B$.
\end{theorem}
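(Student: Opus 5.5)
The plan is to build the wta $\cA$ of Figure~\ref{fig:closure-by-initsem} explicitly and prove two inclusions. For $\langle A\rangle_{\{\oplus,\otimes,\0,\1\}}\subseteq\im(\initialsem{\cA})$ I show $\initialsem{\cA}(g(t))=\mathrm{eval}(t)$ for every $t\in\T_\Delta$, using surjectivity of $\mathrm{eval}$. For the converse I use that all weights of $\cA$ lie in $A\cup\{\0,\1\}$.

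\emph{Construction.} Let $Q=\{q_0,\dots,q_{n-1},v,q_{\mathrm{one}},q_\oplus,q_\otimes\}$, with $F_v=\1$ and $F_q=\0$ otherwise. The nonzero transitions, all with weight $\1$ unless stated, are:
\begin{itemize}
\item $\delta_0(\varepsilon,\alpha,q_0)=\delta_0(\varepsilon,\alpha,q_{\mathrm{one}})=\1$;
\item $\delta_2(q_iq_0,\sigma,q_{i+1})=\1$ for $i\in[0,n-2]$, and $\delta_2(q_{i}q_0,\sigma,v)=a_{i+1}$ for $i\in[0,n-1]$;
\item $\delta_2(q_{\mathrm{one}}q_{\mathrm{one}},\sigma,q_{\mathrm{one}})=\1$;
\item $\delta_2(v\,q_{\mathrm{one}},\sigma,q_\oplus)=\delta_2(q_{\mathrm{one}}v,\sigma,q_\oplus)=\1$ and $\delta_2(q_0q_\oplus,\sigma,v)=\1$;
\item $\delta_2(vv,\sigma,q_\otimes)=\1$ and $\delta_2(q_\otimes q_0,\sigma,v)=\1$.
\end{itemize}
Every other transition, including all transitions for the other symbols of $\Sigma$, gets weight $\0$. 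Then $\initialsem{\cA}(s)=\h_\cA(s)_v$ for every $s\in\T_\Sigma$.

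\emph{Auxiliary facts.} These follow by structural induction on $s\in\T_\Sigma(\{\alpha,\sigma\})$:
\begin{compactitem}
\item[(i)] $\h_\cA(s)_{q_{\mathrm{one}}}=\1$, since exactly one summand survives.
\item[(ii)] $\h_\cA(s)_{q_0}\neq\0$ only if $s=\alpha$, and $\h_\cA(\alpha)_q=\0$ for $q\notin\{q_0,q_{\mathrm{one}}\}$.
\item[(iii)] If $s=\sigma(s_1,s_2)$ with $s_2\neq\alpha$, then $\h_\cA(s)_{q_i}=\0$ for all $i$, because $q_i$ with $i\ge1$ needs $q_0$ at the right child.
\item[(iv)] $\h_\cA(f_i)_{q_i}=\1$ and $\h_\cA(f_i)_v=a_i$ for $i\in[n]$.
\end{compactitem}
For (iv), in the sum (\ref{eq:delta-A-definition}) for $v$ at $f_i=\sigma(f_{i-1},\alpha)$, only the pair $q_{i-1}q_0$ contributes. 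The pair $q_0q_\oplus$ needs $\h(\alpha)_{q_\oplus}\neq\0$, which fails. The pair $q_\otimes q_0$ needs $\h(f_{i-1})_{q_\otimes}\neq\0$, which fails since that requires $v$ at $\alpha$.

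\emph{Main induction.} Next I prove $\h_\cA(g(t))_v=\mathrm{eval}(t)$ by induction on $t$; the base case is (iv).

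For $t=\oplus(t_1,t_2)$, first note that neither $g(t_1)$ nor $g(t_2)$ equals $\alpha$. By (i), $\h(\sigma(g(t_1),g(t_2)))_{q_\oplus}=\mathrm{eval}(t_1)\otimes\1\oplus\1\otimes\mathrm{eval}(t_2)$. At the root $\sigma(\alpha,\cdot)$, the $v$-component then receives $\1\otimes(\mathrm{eval}(t_1)\oplus\mathrm{eval}(t_2))\otimes\1$ from the pair $q_0q_\oplus$. The competing pairs vanish:
\begin{compactitem}
\item $q_iq_0$, because by (ii) the right child $\sigma(\dots)$ has $q_0$-value $\0$;
\item $q_\otimes q_0$, because $\h(\alpha)_{q_\otimes}=\0$.
\end{compactitem}

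For $t=\otimes(t_1,t_2)$ the argument is symmetric. The child $\sigma(g(t_1),g(t_2))$ has $q_\otimes$-value $\mathrm{eval}(t_1)\otimes\mathrm{eval}(t_2)$. Its $q_i$-values are $\0$ by (iii), since $g(t_2)\neq\alpha$. The pair $q_0q_\oplus$ dies because the left child is not $\alpha$.

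The hardest part of the proof is exactly this bookkeeping: checking that no spurious summand reaches $v$, which is why (ii) and (iii) are isolated as lemmas. I rely throughout only on $\0$ being annihilating and $\1$ neutral, never on distributivity, since $\B$ is an arbitrary strong bimonoid.

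\emph{Converse inclusion.} Let $C=\langle A\rangle_{\{\oplus,\otimes,\0,\1\}}$. By structural induction, every component $\h_\cA(s)_q$ lies in $C$, because (\ref{eq:delta-A-definition}) only applies $\oplus$ and $\otimes$ to elements of $C$ and weights in $A\cup\{\0,\1\}\subseteq C$. Hence $\initialsem{\cA}(s)\in C$ for every $s$, and $\im(\initialsem{\cA})=C$. If $A$ generates $\B$, then $C=B$.
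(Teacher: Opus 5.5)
Your proposal is correct and is essentially the paper's proof. You use the same automaton (your transitions coincide with the paper's $\delta_2$), the same encoding $g$, the same kind of vanishing lemmas, and the same argument for $\im(\initialsem{\cA})\subseteq\langle A\rangle_{\{\oplus,\otimes,\0,\1\}}$ via $\mathrm{wts}(\cA)=A\cup\{\0,\1\}$. You also carry out explicitly the structural induction for $\h_\cA(g(t))_v=\mathrm{eval}(t)$, which the paper leaves to the reader.

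One fact needs adding. When you say only $q_{i-1}q_0$ contributes at $f_i$, you use $\h_\cA(f_i)_{q_j}=\0$ for $j\neq i$, which is the paper's \eqref{eq:usual-values-1.2}. Your (ii) and (iii) do not cover it. It follows by induction on $i$ from $\h_\cA(\sigma(s,\alpha))_{q_j}=\h_\cA(s)_{q_{j-1}}$ for $j\ge 1$. Also, the $q_i$-part of your (iv) should range over $i\in[0,n-1]$, since there is no state $q_n$.
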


\begin{proof} Clearly, $\langle A \rangle_{\{\oplus,\otimes,\0,\1\}} = \langle A \cup \{\0,\1\} \rangle_{\{\oplus,\otimes\}}$. Let $a_1,\ldots,a_n$ be the elements of $A \cup \{\0,\1\}$. Let $\alpha \in \Sigma^{(0)}$ and $\sigma \in \Sigma^{(2)}$ be arbitrary elements. Moreover, the ranked alphabet $\Delta$ and the trees $f_1,\ldots,f_n \in \T_\Sigma$ are defined as above.
  Now we construct the $(\Sigma,\B)$-wta $\cA=(Q,\delta,F)$ as follows (cf. Figure \ref{fig:closure-by-initsem}).

\begin{compactitem}
\item $Q = \{v\} \cup \{q_0,\ldots,q_{n-1}\} \cup \{q_{\mathrm{one}},q_{\oplus}\} \cup \{q_{\otimes}\}$; the intention of the states is as follows:
  \begin{compactitem}
  \item $v$ is the ``main'' state with $\h_\cA(g(t))_v = \mathrm{eval}(t)$ for each $t \in \T_\Delta$ (cf. \eqref{eq:wta-computes-value-on-representations});
  \item the state  $q_i$ with $i \in [0,n-1]$ is used to recognize the tree $f_{i}$ with weight $\1$ (cf. \eqref{eq:usual-values-1}), and in combination with $v$, we will have $\h_\cA(\sigma(f_{i-1},\alpha))_v = a_i$;
  \item the states $q_{\oplus}$ and $q_{\otimes}$ are intermediate states such that $\h_\cA(\sigma(t_1,t_2))_{q_\oplus} = \mathrm{eval}(t_1) \oplus \mathrm{eval}(t_2)$ and  $\h_\cA(\sigma(t_1,t_2)))_{q_\otimes} = \mathrm{eval}(t_1) \otimes \mathrm{eval}(t_2)$, respectively; the state $q_{\mathrm{one}}$ supports $q_\oplus$;
    \item the switch from $q_\oplus$ to $v$ and from $q_\otimes$ to $v$ is triggered by the patterns $\sigma(\alpha,.)$ and $\sigma(.,\alpha)$, respectively,
    \end{compactitem}
\item $F_v=\mathbb{1}$ and, for each $q \in Q \setminus \{v\}$, we let $F_q=\mathbb{0}$, and
\item for each $q \in Q$, we define
  \(\delta_0(\varepsilon,\alpha,q) =\1\) if $q \in \{q_0,q_{\mathrm{one}}\}$, and     $\delta_0(\varepsilon,\alpha,q) =  \0$ otherwise,  and, for every $p,q,r \in Q$, we define
    \[
\delta_2(pq,\sigma,r) = 
\left\{
\begin{array}{ll}
\mathbb{1} & \text{if there exists $i \in [n-1]$ such that } pqr= q_{i-1}q_0q_i\\
  a_i & \text{if there exists $i \in [n]$ such that } pqr= q_{i-1}q_0v\\
  \1 & \text{if } pqr \in \{vq_{\mathrm{one}}q_{\oplus},\ q_{\mathrm{one}}vq_{\oplus},\ q_0q_{\oplus}v\} \\
  \1 & \text{if } pqr \in \{vvq_{\otimes},\ q_{\otimes}q_0v\}\\
    \1 & \text{if } pqr = q_{\mathrm{one}}q_{\mathrm{one}}q_{\mathrm{one}}\\
  \0 & \text{otherwise} \enspace,
\end{array}
\right.
\]
\item for every $k\in \mathbb{N}$, $\eta \in \Sigma^{(k)}$ with $\sigma\ne \eta\ne \alpha$, and $p,p_1,\ldots,p_k\in Q$, we define $\delta_k(p_1\ldots p_k,\eta,p)=\0$.
\end{compactitem}

Next we prove that $\im(\initialsem{\cA}) = \langle A \rangle_{\{\oplus,\otimes,\0,\1\}}$. Since $\mathrm{wts}(\cA) = A \cup \{\0,\1\}$, we have $\im(\initialsem{\cA}) \subseteq \langle A \rangle_{\{\oplus,\otimes,\0,\1\}}$. Thus it remains to prove $\langle A \rangle_{\{\oplus,\otimes,\0,\1\}}\subseteq \im(\initialsem{\cA})$.

For this purpose, we need some auxiliary statements, which are easy to see.
\begin{align}
&  \text{For each $q \in \{v,q_\oplus,q_\otimes\} \cup \{q_1,\ldots,q_{n-1}\}$, we have $\h_\cA(\alpha)_q=\0$\enspace.} \label{eq:usual-values-3} \\
& \text{For each $i \in [0,n]$, we have $\h_\cA(f_{i})_{q_\otimes}=\0$ \enspace.}
 \label{eq:usual-values-1.5} \\
& \text{For every $s_1,s_2 \in \T_\Sigma$, we have $\h_\cA(\sigma(s_1,s_2))_{q_0} = \0$\enspace.} \label{eq:usual-values-4} \\
 & \text{For every $s_1,s_2,s_3 \in \T_\Sigma$ and $i \in [0,n-1]$, we have $\h_\cA(\sigma(s_1,\sigma(s_2,s_3)))_{q_i} = \0$ \enspace.} \label{eq:usual-values-5} \\
 & \text{For each $s \in \T_{\{\sigma,\alpha\}}$, we have $\h_\cA(s)_{q_{\mathrm{one}}}=\1$\enspace.} \label{eq:usual-values-6}
 \end{align}
 
First, by bounded induction on $i$, we prove the following:
\begin{equation}
  \text{For each $i,j \in [0,n-1]$ with $i \le j$, we have $\h_\cA(f_i)_{q_{j}}=\h_\cA(f_0)_{q_{j-i}}$ \enspace.} \label{eq:usual-values-1.1}
\end{equation}
The case $i=0$ is trivial. For the induction step, let $i+1 \le j$. We can calculate as follows:
\begin{align*}
\h_\cA(f_{i+1})_{q_{j}} &= \h_\cA(\sigma(f_i,\alpha))_{q_j} = \h_\cA(f_i)_{q_{j-1}} \otimes \h_\cA(\alpha)_{q_0} \otimes \delta_2(q_{j-1}q_0,\sigma,q_j) = \h_\cA(f_i)_{q_{j-1}} = \h_\cA(f_0)_{q_{j-(i+1)}} \enspace,
\end{align*}
where the last equality holds by induction hypothesis. This proves \eqref{eq:usual-values-1.1}.

Second, by bounded induction on $i$, we prove that:
\begin{equation}
  \text{For each $i,j \in [0,n-1]$ with $i \ge j$, we have $\h_\cA(f_i)_{q_{j}}=\h_\cA(f_{i-j})_{q_0}$ \enspace.} \label{eq:usual-values-1.11}
\end{equation}
The case $i=0$ is trivial again. For the induction step, let $i+1 \ge j$. As above we obtain $\h_\cA(f_{i+1})_{q_{j}} =  \h_\cA(f_i)_{q_{j-1}}$. Then, by induction hypothesis, we obtain that $\h_\cA(f_i)_{q_{j-1}} = \h_\cA(f_{(i+1)-j})_{q_0}$. This  proves \eqref{eq:usual-values-1.11}.

Since,
\begin{compactitem}
\item for each $i < j$, Equality \eqref{eq:usual-values-1.1} implies that $\h_\cA(f_i)_{q_{j}} = \h_\cA(f_0)_{q_{j-i}} = \h_\cA(\alpha)_{q_{j-i}} = \0$ (because $q_{j-i} \not= q_0$), 
\item for each $i > j$, Equality \eqref{eq:usual-values-1.11} implies that $\h_\cA(f_i)_{q_{j}}=\h_\cA(f_{i-j})_{q_0}= \0$ (because $f_{i-j}\not= \alpha$), and
\item for $i=j$, Equalities \eqref{eq:usual-values-1.1} and \eqref{eq:usual-values-1.11} imply that $\h_\cA(f_i)_{q_{j}}=\h_\cA(f_0)_{q_{j-i}} = \h_\cA(f_{i-j})_{q_0}= \h_\cA(f_0)_{q_0} = \h_\cA(\alpha)_{q_0}= \1$,
\end{compactitem}
the following two statements hold:
\begin{align}
& \text{For each $i \in [0,n-1]$, we have $\h_\cA(f_i)_{q_{i}}  = \1$ \enspace.} \label{eq:usual-values-1}\\
& \text{For each $i,j \in [0,n-1]$ with $j \not= i$, we have $\h_\cA(f_i)_{q_{j}}=\0$ \enspace.} \label{eq:usual-values-1.2}
\end{align}

This finishes the proofs of auxiliary statements.
Then, by using these auxiliary statements and by structural induction, we can prove that:
\begin{equation}\label{eq:wta-computes-value-on-representations}
  \text{For each $t \in \T_\Delta$, we have $\h_\cA(g(t))_v = \mathrm{eval}(t)$} \enspace,
  \end{equation}
 where the mappings $\mathrm{eval}: \T_\Delta \to \langle A \rangle_{\{\oplus,\otimes,\0,\1\}}$ 
 and $g:\T_\Delta \to \T_\Sigma $ are defined before this theorem. %(for the proof, see Appendix).

Now let $a \in \langle A\rangle_{\{\oplus,\otimes,\0,\1\}}$. Since  $\mathrm{eval}$ is surjective, there exists $t \in \T_\Delta$ such that $\mathrm{eval}(t) = a$.  Then\\
\hspace*{20mm}
$\initialsem{\cA}(g(t)) = \bigoplus_{p \in Q} \h_\cA(g(t))_p \otimes F_p = \h_\cA(g(t))_v = \mathrm{eval}(t) = a$\\
where the last but one equality follows from   \eqref{eq:wta-computes-value-on-representations}.
Hence $\langle A\rangle_{\{\oplus,\otimes,\0,\1\}} \subseteq \im(\initialsem{\cA})$.
\end{proof}

We note that the proof of Theorem \ref{lm:closure-of-finite-set-i-recognizable}  is effective, even if the strong bimonoid  $\B$  is not given effectively:
Given the ranked alphabet  $\Sigma$  and the subset  $A$  generating  $\B$  as input data, the proof gives 
the construction of the requested wta  $\cA$  satisfying   
$\im(\initialsem{\cA}) = \langle A\rangle_{\{\oplus,\otimes,\0,\1\}}$.

Next we state a kind of padding lemma. Given a $(\Sigma,\B)$-wta $\cA$, it allows to extend the rank of symbols in $\Sigma$ (leading to the ranked alphabet $\Sigma'$) and it constructs a $(\Sigma',\B)$-wta $\cB$ which behaves on $\Sigma'$-input trees in the same way as $\cA$ behaves on the original $\Sigma$-trees.

Formally, let $(\Sigma,\rk)$ and $(\Sigma,\rk_e)$ be two ranked alphabets over the same set $\Sigma$. We say that  $(\Sigma,\rk_e)$ is an \emph{extension} of  $(\Sigma,\rk)$ if, for every $k \in \mathbb{N}$ and $\sigma \in \Sigma$, the relation $\rk(\sigma) \le \rk_e(\sigma)$ holds and, moreover, $\rk(\sigma)=0$ implies $\rk_e(\sigma)=0$. 
We denote the difference $\rk_e(\sigma) - \rk(\sigma)$ by $\e(\sigma)$. We use the standard definition of tree homomorphism as defined in \cite{gecste84,eng75-15}. 

Let $(\Sigma,\rk)$ and $(\Sigma,\rk_e)$ be as above. Let $\alpha$ be an arbitrary symbol in  $\Sigma^{(0)}$. 
We define the $((\Sigma,\rk),(\Sigma,\rk_e))$-tree homomorphism $g= (g_k \mid k \in \mathbb{N})$ as follows.
  For every $\sigma \in \Sigma$ and $k\in\mathbb{N}$ such that $\rk(\sigma)=k$, we let
  $g_k(\sigma) = \sigma(z_1,\ldots,z_k,\alpha,\ldots,\alpha)$ with $\e(\sigma)$ occurrences of $\alpha$. This tree homomorphism might be called a  padding.
  Clearly, for each $\xi \in \T_{(\Sigma,\rk)}$, we have $\pos(\xi)\subseteq \pos(g(\xi))$, and for each
  $w\in \pos(g(\xi))$, we have
  \begin{align}\label{eq:how-g-works}
  g(\xi)(w)= \begin{cases}
   \xi(w) & \text{ if $w\in \pos(\xi)$} \\
   \alpha & \text{ otherwise.}
  \end{cases}
  \end{align}

\begin{lemma}\rm \label{lm:padding} \cite[Lm.~5.4.1]{fulvog26} Let $(\Sigma,\rk)$ and $(\Sigma,\rk_e)$ be two ranked alphabets such that $(\Sigma,\rk_e)$ is an extension of $(\Sigma,\rk)$ and $\rk_e \ne \rk$. Let $\alpha$ be an arbitrary symbol in  $\Sigma^{(0)}$ and let $g$ be the tree homomorphism defined above. Moreover, let $\cA$ be a $((\Sigma,\rk),\B)$-wta. Then we can construct a $((\Sigma,\rk_e),\B)$-wta $\cB$ such that the following two statements hold.
  \begin{compactenum}
  \item[(1)] For each $\xi \in \T_{(\Sigma,\rk)}$ we have $\initialsem{\cB}(g(\xi)) = \initialsem{\cA}(\xi)$  and
    for each $\zeta \in \T_{(\Sigma,\rk_e)}\setminus g(\T_{(\Sigma,\rk)})$ \\ we have $\initialsem{\cB}(\zeta) = \0$. 
  \item[(2)] $\im(\initialsem{\cB}) = \im(\initialsem{\cA}) \cup\{\0\}$.
    \end{compactenum}
    \end{lemma}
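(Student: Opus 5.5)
The plan is to build $\cB$ from $\cA$ by adding one fresh state that reads padding leaves and then plays no other role. Let $\cA=(Q,\delta,F)$ and put $Q'=Q\cup\{p_\alpha\}$ with $p_\alpha\notin Q$. The root weights are $F'_q=F_q$ for $q\in Q$ and $F'_{p_\alpha}=\0$. The transitions are as follows:
\begin{itemize}
\item for $\sigma$ with $\rk(\sigma)=k$ and $\rk_e(\sigma)=k+\e(\sigma)$, let $\delta'(q_1\cdots q_k p_\alpha\cdots p_\alpha,\sigma,q)=\delta_k(q_1\cdots q_k,\sigma,q)$ for $q_1,\ldots,q_k,q\in Q$, with $\e(\sigma)$ trailing copies of $p_\alpha$;
\item let $\delta'_0(\varepsilon,\alpha,p_\alpha)=\1$;
\item set every other transition weight to $\0$; in particular nothing leads into $p_\alpha$ except the nullary symbol $\alpha$, and no transition whose $i$-th argument is $p_\alpha$ is nonzero when $i\le\rk(\sigma)$.
\end{itemize}
Nullary symbols keep their original weights into $Q$. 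Since nullary symbols stay nullary, this is well defined.

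The key step is to prove two statements by structural induction on trees in $\T_{(\Sigma,\rk_e)}$.

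(i) For every $\zeta$ we have $\h_\cB(\zeta)_{p_\alpha}=\1$ if $\zeta=\alpha$, and $\0$ otherwise. This is immediate, because only $\alpha$ has a nonzero transition into $p_\alpha$.

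(ii) For every $\xi\in\T_{(\Sigma,\rk)}$ and $q\in Q$ we have $\h_\cB(g(\xi))_q=\h_\cA(\xi)_q$. Write $g(\xi)=\sigma(g(\xi_1),\ldots,g(\xi_k),\alpha,\ldots,\alpha)$. In the sum \eqref{eq:delta-A-definition} for $\cB$, every state sequence has weight $\0$ unless it lies in $Q^k p_\alpha^{\e(\sigma)}$. On the surviving summands, the padded factors are $\h_\cB(\alpha)_{p_\alpha}=\1$, so they drop out of the product by neutrality of $\1$. Together with the induction hypothesis this gives exactly $\cA$'s sum. The removal of $\0$-summands uses only that $\0$ is annihilating and additively neutral, so no distributivity is needed.

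For trees outside the image we show: for every $\zeta\in\T_{(\Sigma,\rk_e)}\setminus g(\T_{(\Sigma,\rk)})$ and every $q\in Q$, $\h_\cB(\zeta)_q=\0$. Use induction on $\zeta=\sigma(\zeta_1,\ldots,\zeta_m)$ with $m=\rk_e(\sigma)$. Since $\zeta\notin g(\T_{(\Sigma,\rk)})$, either some $\zeta_i$ with $i\le\rk(\sigma)$ is not in the image, or some padded child $\zeta_j$ with $j>\rk(\sigma)$ is different from $\alpha$. (A nullary $\zeta$ is always in the image.)
\begin{itemize}
\item In the first case, the summands with $q_i\in Q$ vanish by the induction hypothesis, and those with $q_i=p_\alpha$ have transition weight $\0$.
\item In the second case, the factor $\h_\cB(\zeta_j)_{p_\alpha}$ is $\0$ by (i), and any other choice of $q_j$ gives transition weight $\0$.
\end{itemize}
Since $F'_{p_\alpha}=\0$, statement (1) follows from these facts together with (ii).

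For (2), the inclusion $\supseteq$ of $\im(\initialsem{\cA})$ comes from (1). The extra value $\0$ is attained because $\rk_e\ne\rk$: some $\sigma$ has $\e(\sigma)>0$. Take $\sigma$ with all arguments $\alpha$; this tree has an $\alpha$ in a non-padding position where $g$ would need a tree of rank-$\rk$ shape. More carefully: if $\rk(\sigma)\ge1$, put $\sigma(\ldots)$ in a padding slot instead of $\alpha$; the result is not in $g(\T_{(\Sigma,\rk)})$. Such trees exist since $\e(\sigma)>0$ forces $\rk_e(\sigma)\ge1$, and $\sigma$ then has a padding slot. The inclusion $\subseteq$ is immediate from (1).

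The main obstacle is the careful case analysis in the bookkeeping for trees outside the image. Its subtle point is that the two kinds of failure, in a regular slot or in a padding slot, are both killed by zero factors without any distributivity.
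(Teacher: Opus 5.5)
Your construction is the paper's: one fresh state for the padding leaves, padding slots forced onto that state, and $F'_{q_\alpha}=\0$. Your inductive verification correctly supplies the proof of (1) that the paper omits, and (2) then follows as in the paper; one slip to remove is your first witness $\sigma(\alpha,\ldots,\alpha)$, which does lie in $g(\T_{(\Sigma,\rk)})$ (it is $g$ of $\sigma(\alpha,\ldots,\alpha)$ with $\rk(\sigma)$ arguments), so keep only your corrected witness with a non-$\alpha$ subtree in a padding slot, where the extra hypothesis $\rk(\sigma)\ge 1$ holds automatically.
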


\begin{proof}   We construct the $((\Sigma,\rk_e),\B)$-wta $\cB =(Q',\delta',F')$ by letting $Q' = Q \cup \{q_\alpha\}$ where $q_\alpha$ is a new state (i.e., $q_\alpha \not\in Q$), $F'_q = F_q$ for each $q \in Q$ and $F'_{q_\alpha}=\0$. Moreover, for every $\sigma \in \Sigma$ and $q_1',\ldots,q_{\rk_e(\sigma)}',q' \in Q'$, using $k$ as abbreviation for $\rk(\sigma)$, we let
  \begin{align*}
    &(\delta')_{\rk_e(\sigma)}(q_1' \cdots q_{\rk_e(\sigma)}',\sigma,q') =\\
    & \ \ \begin{cases}
      \delta_k(q_1'\cdots q_k',\sigma,q') & \text{ if $q_1',\ldots,q_k' \in Q$, $q_{k+1}' \cdots q_{\rk_e(\sigma)}' = {q_\alpha}^{\e(\sigma)}$, and $q' \in Q$}\\
      \1 & \text{ if $k=0$, $\sigma=\alpha$, and  $q' = q_\alpha$}\\
      \0 & \text{ otherwise}  \enspace.
      \end{cases}
    \end{align*}
    The proof of (1) is omitted. The proof of (2) easily follows from (1).
    \end{proof}

\begin{theorem}\label{lm:closure-of-finite-set-i-recognizable-stronger}  Let $\Sigma$ be a ranked alphabet which contains a symbol with rank at least 2, i.e., $\Sigma \ne \Sigma^{(0)} \cup \Sigma^{(1)}$. Moreover, let $\B=(B,\oplus,\otimes,\0,\1)$ be a strong bimonoid and $A \subseteq B$  be a finite subset. Then we can construct a  $(\Sigma,\B)$-weighted tree automaton  $\cA$  such that  $\im(\initialsem{\cA}) = \langle A \rangle_{\{\oplus,\otimes,\0,\1\}}$. In particular, if  $\B$  is generated by  $A$, then we obtain  $\im(\initialsem{\cA}) = B$.
\end{theorem}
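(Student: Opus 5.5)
We reduce the statement to Theorem~\ref{lm:closure-of-finite-set-i-recognizable} by the padding Lemma~\ref{lm:padding}.

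Let $\Sigma=(\Sigma,\rk_e)$ be given and fix $\sigma\in\Sigma$ with $\rk_e(\sigma)\ge 2$. Fix also some $\alpha\in\Sigma^{(0)}$, which exists by the definition of a ranked alphabet. We define a new rank mapping $\rk$ on the same set $\Sigma$ by putting $\rk(\sigma)=2$ and $\rk(\eta)=0$ for every $\eta\in\Sigma\setminus\{\sigma\}$. Then $\rk(\eta)\le\rk_e(\eta)$ for each $\eta\in\Sigma$, and $\rk(\eta)=0$ holds whenever $\rk_e(\eta)=0$, because $\sigma$ is not nullary in $\rk_e$. Hence $(\Sigma,\rk_e)$ is an extension of $(\Sigma,\rk)$. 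Moreover, $(\Sigma,\rk)$ contains the binary symbol $\sigma$ and the nullary symbol $\alpha$.

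By Theorem~\ref{lm:closure-of-finite-set-i-recognizable}, applied to $(\Sigma,\rk)$ and $A$, we construct a $((\Sigma,\rk),\B)$-wta $\cA'$ with $\im(\initialsem{\cA'})=\langle A\rangle_{\{\oplus,\otimes,\0,\1\}}$. The construction only uses $\sigma$ and $\alpha$.

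If $\rk_e=\rk$, we are done with $\cA=\cA'$. Otherwise Lemma~\ref{lm:padding}, applied with the chosen $\alpha$, yields a $((\Sigma,\rk_e),\B)$-wta $\cB$ with $\im(\initialsem{\cB})=\im(\initialsem{\cA'})\cup\{\0\}$. Since $\0\in\langle A\rangle_{\{\oplus,\otimes,\0,\1\}}$, this union equals $\langle A\rangle_{\{\oplus,\otimes,\0,\1\}}$, so we can take $\cA=\cB$. When $A$ generates $\B$, the closure is $B$, which gives the last claim.

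The only delicate step is the choice of $\rk$. It must respect the extension condition, which requires nullary symbols to stay nullary and ranks to never increase, and it must still contain a binary symbol. The choice above meets both requirements. Finally, the construction is effective because both Theorem~\ref{lm:closure-of-finite-set-i-recognizable} and Lemma~\ref{lm:padding} are constructive.
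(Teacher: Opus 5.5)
Your strategy is the paper's: reduce to Theorem~\ref{lm:closure-of-finite-set-i-recognizable}, apply the padding Lemma~\ref{lm:padding}, and absorb the extra $\0$. However, the step where you check the hypothesis of Lemma~\ref{lm:padding} is wrong. The paper defines $(\Sigma,\rk_e)$ to be an extension of $(\Sigma,\rk)$ if $\rk(\eta)\le\rk_e(\eta)$ and \emph{$\rk(\eta)=0$ implies $\rk_e(\eta)=0$}. You checked the converse, $\rk_e(\eta)=0\Rightarrow\rk(\eta)=0$, which already follows from $\rk\le\rk_e$. Your choice $\rk(\eta)=0$ for every $\eta\neq\sigma$ breaks the required implication whenever $\Sigma$ has another non-nullary symbol. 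For example, for $\Sigma=\{\sigma^{(3)},\gamma^{(1)},\alpha^{(0)}\}$ you get $\rk(\gamma)=0$ but $\rk_e(\gamma)=1$. So the claim ``$(\Sigma,\rk_e)$ is an extension of $(\Sigma,\rk)$'' is false in general, and Lemma~\ref{lm:padding} cannot be cited as stated. Your remark that the choice of $\rk$ is ``the only delicate step'' points exactly at the error: the condition forbids making a non-nullary symbol nullary, which is what you do.

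The simplest repair is the paper's choice: $\rk(\sigma)=2$ and $\rk(\eta)=\rk_e(\eta)$ for all $\eta\neq\sigma$. Then the extension condition holds trivially and $\sigma$ is binary in $(\Sigma,\rk)$. Also, $\rk=\rk_e$ exactly when $\rk_e(\sigma)=2$, and in that case Theorem~\ref{lm:closure-of-finite-set-i-recognizable} applies directly, because its wta gives weight $\0$ to all symbols other than $\sigma$ and $\alpha$.

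Alternatively you could keep your $\rk$. Since your $\alpha$ is nullary under $\rk_e$, the construction of $\cB$ in the proof of Lemma~\ref{lm:padding} still works. For a symbol $\eta$ with $\rk(\eta)=0<\rk_e(\eta)$ one gets $\h_\cB(\eta(\alpha,\ldots,\alpha))_q=\delta_0(\varepsilon,\eta,q)$ for $q\in Q$, because $\h_\cB(\alpha)_{q_\alpha}=\1$. The extension condition only serves to make every $\rk$-nullary symbol a legal padding symbol. But then you would have to prove this variant of the lemma yourself, since the paper states it only for extensions and omits the proof of part~(1).
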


\begin{proof} Let $(\Sigma,\rk)$ be a ranked alphabet with $\Sigma \ne \Sigma^{(0)} \cup \Sigma^{(1)}$. Hence there exists a symbol $\sigma \in \Sigma$ with $\rk(\sigma) \geq 2$.
If $\rk(\sigma) = 2$, then the claim is immediate by Theorem~\ref{lm:closure-of-finite-set-i-recognizable}. Therefore assume that  $\rk(\sigma) > 2$.
 We define the ranked alphabet $(\Sigma,\rk')$ such that 
$\rk'(\sigma) = 2$  and 
  $\rk'(\omega) = \rk(\omega)$ for each $\omega \in \Sigma \setminus \{\sigma\}$. 
 Then  $(\Sigma,\rk)$ is an extension of $(\Sigma,\rk')$. 
Now let  $A \subseteq B$  be a finite subset.
   First, since $(\Sigma,\rk')$ contains the binary symbol $\sigma$,  by Theorem~\ref{lm:closure-of-finite-set-i-recognizable}, we can construct a $((\Sigma,\rk'),\B)$-wta $\cA'$ such that $\im(\initialsem{\cA'}) = \langle A \rangle_{\{\oplus,\otimes,\0,\1\}}$.
  Second, by applying Lemma~\ref{lm:padding} to $(\Sigma,\rk')$, $(\Sigma,\rk)$, and $\cA'$, we obtain the $((\Sigma,\rk),\B)$-wta $\cA$. Since $\0 \in \langle A \rangle_{\{\oplus,\otimes,\0,\1\}}$, by Lemma~\ref{lm:padding}(2) we have $\im(\initialsem{\cA})= \langle A \rangle_{\{\oplus,\otimes,\0,\1\}}$. 
\end{proof}

 As an immediate consequence of Theorem~\ref{lm:closure-of-finite-set-i-recognizable-stronger}, we obtain the following result.

  \begin{corollary} \label{lm:weak-loc-fin-not-loc-fin-monadic-is-weaker}\rm Let $\Sigma$ be a ranked alphabet and $\B$ a strong bimonoid.  If $\Sigma$ contains a symbol with rank at least  2 and $\B$ is not locally finite, then there exists a  $(\Sigma,\B)$-weighted tree automaton $\cA$ such that the mapping $\initialsem{\cA}$ has an infinite image.
       \end{corollary}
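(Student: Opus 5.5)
The corollary should follow directly from Theorem~\ref{lm:closure-of-finite-set-i-recognizable-stronger}; the only work is to turn ``not locally finite'' into a concrete finite generating set with infinite closure. Since $\B$ is not locally finite, the definition of local finiteness (applied to the $\{\oplus,\otimes,\0,\1\}$-algebra $\B$) gives a finite subset $A \subseteq B$ such that $\langle A \rangle_{\{\oplus,\otimes,\0,\1\}}$ is infinite. Note that a finite $A$ with infinite closure under $\{\oplus,\otimes\}$ also has infinite closure under $\{\oplus,\otimes,\0,\1\}$, so the signature convention does not matter here.

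Next, since $\Sigma$ contains a symbol with rank at least $2$, i.e., $\Sigma \ne \Sigma^{(0)}\cup\Sigma^{(1)}$, the hypotheses of Theorem~\ref{lm:closure-of-finite-set-i-recognizable-stronger} hold for $\Sigma$, $\B$ and this finite set $A$. That theorem yields a $(\Sigma,\B)$-wta $\cA$ with $\im(\initialsem{\cA}) = \langle A \rangle_{\{\oplus,\otimes,\0,\1\}}$. This set is infinite, so $\initialsem{\cA}$ has infinite image.

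There is no real obstacle. The only point to check is that the notion of local finiteness used for $\B$ matches the closure operator appearing in Theorem~\ref{lm:closure-of-finite-set-i-recognizable-stronger}, and the remark above settles this. Combined with Theorem~\ref{thm:M-is-what-we-want}, the corollary applies in particular to $\sfM(X)$, which is bi-locally finite and right-distributive.
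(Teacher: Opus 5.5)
Your proposal is correct and matches the paper, which presents the corollary as an immediate consequence of Theorem~\ref{lm:closure-of-finite-set-i-recognizable-stronger}. Failure of local finiteness gives a finite $A$ with infinite $\langle A\rangle_{\{\oplus,\otimes,\0,\1\}}$, and that theorem then yields a wta whose image is exactly this set (your remark on the choice of signature for the closure is a harmless extra check).
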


Now the main consequence of our results follows from Theorem \ref{thm:M-is-what-we-want} and Corollary \ref{lm:weak-loc-fin-not-loc-fin-monadic-is-weaker}.

\begin{theorem}\label{cor:second-main} There exists a bi-locally finite right-distributive strong bimonoid $\B$ such that for each ranked alphabet $\Sigma$ which contains a symbol with rank at least 2, there exists a $(\Sigma,\B)$-weighted tree automaton $\cA$
for which $\im(\initialsem{\cA})$ is an infinite set.
\end{theorem}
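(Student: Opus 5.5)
The plan is to combine Theorem~\ref{thm:M-is-what-we-want} with Corollary~\ref{lm:weak-loc-fin-not-loc-fin-monadic-is-weaker}. The one subtlety is the order of quantifiers: $\B$ must be fixed before $\Sigma$ is chosen. So we first fix a concrete strong bimonoid that does not depend on $\Sigma$, and only then let $\Sigma$ range over all ranked alphabets containing a symbol of rank at least~2.

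Concretely, fix a non-empty set $X$, say $X=\{x\}$, and let $\B=\sfM(X)$. By Theorem~\ref{thm:M-is-what-we-want}, $\B$ is a right-distributive, bi-locally finite strong bimonoid that is not locally finite. Its proof also provides the specific witness we need: the finite set $A=\{\1,a_0\}$ with $a_0=[[x]_E]_\mathrm{la}$ satisfies that $\langle A\rangle_{\{\oplus,\otimes\}}$ is infinite, because it contains the pairwise distinct elements $a_n$.

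Now let $\Sigma$ be any ranked alphabet with $\Sigma\ne\Sigma^{(0)}\cup\Sigma^{(1)}$. We apply Theorem~\ref{lm:closure-of-finite-set-i-recognizable-stronger} to $\Sigma$, $\B$ and the finite set $A$. This yields a $(\Sigma,\B)$-wta $\cA$ with $\im(\initialsem{\cA})=\langle A\rangle_{\{\oplus,\otimes,\0,\1\}}$. This set contains $\langle A\rangle_{\{\oplus,\otimes\}}$ and is therefore infinite. Equivalently, one can invoke Corollary~\ref{lm:weak-loc-fin-not-loc-fin-monadic-is-weaker} directly, since its hypotheses are exactly that $\Sigma$ has a symbol of rank at least~2 and that $\B$ is not locally finite.

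There is no real obstacle here. The only point that needs checking is that "not locally finite" in the corollary means that some finite subset generates an infinite subalgebra with respect to $\{\oplus,\otimes,\0,\1\}$. Adding the constants $\0$ and $\1$ to the generating set cannot make an infinite closure finite, so the witness from Theorem~\ref{thm:M-is-what-we-want} applies as stated.
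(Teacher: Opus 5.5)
Your proof is correct and follows the paper's own argument: the paper also fixes $\sfM(X)$ from Theorem~\ref{thm:M-is-what-we-want}, which does not depend on $\Sigma$, and applies Corollary~\ref{lm:weak-loc-fin-not-loc-fin-monadic-is-weaker}. That corollary amounts to Theorem~\ref{lm:closure-of-finite-set-i-recognizable-stronger} applied to a finite set with infinite closure, and your explicit witness $A=\{\1,a_0\}$ is the one from the proof of Theorem~\ref{thm:M-is-what-we-want}.
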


Due to the equality of initial algebra semantics and run semantics of wta over semirings we obtain the following consequence of Theorem~\ref{thm:semiring-run=initial} and Theorem~\ref{lm:closure-of-finite-set-i-recognizable-stronger}.

\begin{corollary}\label{cor:second-main-semirings} \rm Let $\Sigma$ be an arbitrary ranked alphabet which contains a symbol with rank at least 2, and let $\B =(B,\oplus,\otimes,\0,\1)$ be any finitely generated semiring. Then there exists a weighted tree automaton $\cA$ over $\Sigma$ and $\B$ such
  that $\im(\runsem{\cA}) = B$.
  \end{corollary}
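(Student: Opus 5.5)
The corollary follows directly from Theorem~\ref{lm:closure-of-finite-set-i-recognizable-stronger} combined with Theorem~\ref{thm:semiring-run=initial}.

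First, since $\B$ is finitely generated, I fix a finite subset $A\subseteq B$ with $\langle A\rangle_{\{\oplus,\otimes,\0,\1\}} = B$. A semiring is in particular a strong bimonoid, and $\Sigma$ contains a symbol of rank at least~2. Hence Theorem~\ref{lm:closure-of-finite-set-i-recognizable-stronger} applies and yields a $(\Sigma,\B)$-wta $\cA$ with $\im(\initialsem{\cA}) = B$. Here ``generated'' should be read in the strong-bimonoid sense, using $\oplus,\otimes,\0,\1$. Generation as a semiring is the same notion, since semirings form a subclass of strong bimonoids closed under these operations.

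Second, I transfer the statement from the initial algebra semantics to the run semantics. Since $\B$ is a semiring, it is both left- and right-distributive. Therefore condition~(A) of Theorem~\ref{thm:semiring-run=initial} holds for every ranked alphabet, and in particular for $\Sigma$. By the implication (A)$\Rightarrow$(B), this gives $\runsem{\cA} = \initialsem{\cA}$, and consequently $\im(\runsem{\cA}) = \im(\initialsem{\cA}) = B$.

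There is no real obstacle here. The only point to check is that the automaton produced by Theorem~\ref{lm:closure-of-finite-set-i-recognizable-stronger} is an ordinary $(\Sigma,\B)$-wta, which it is, so that the equivalence of Theorem~\ref{thm:semiring-run=initial} applies to it without change.
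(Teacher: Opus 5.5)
Your proposal is correct and follows the paper's own argument: you apply Theorem~\ref{lm:closure-of-finite-set-i-recognizable-stronger} to a finite generating set of the semiring to get a wta whose initial algebra semantics has image $B$. You then use Theorem~\ref{thm:semiring-run=initial}, whose condition (A) holds because a semiring is both left- and right-distributive, to conclude that the run semantics coincides with the initial algebra semantics.
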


%%%%%%%%%%%%%%%%%%%%%%%%%%%%%%%%%%%%%%%%%%%%%%%%%%%%%%%%%%%%%%%%%%%%%%%%%%%%%%%%%%%%%%

\section{Further research}\label{sect:further-research}

The following nice characterizations of the finite image property of the initial algebra semantics resp. of the run semantics
are known.

\begin{theorem}\label{thm:loc-finite-rec-step-function} (cf. \cite[Thms.~16.1.6, ~16.2.7]{fulvog26}) Let  $\B$ be a strong bimonoid. 
\begin{compactenum}
\item[(A)]  $\B$ is locally finite if and only if for each ranked alphabet $\Sigma$ and for each $(\Sigma,\B)$-wta $\cA$, the set $\im(\initialsem{\cA})$  is finite.
\item[(B)]  $\B$ is bi-locally finite if and only if for each ranked alphabet $\Sigma$ and for each $(\Sigma,\B)$-wta $\cA$, the set $\im(\runsem{\cA})$  is finite.
\end{compactenum}
\end{theorem}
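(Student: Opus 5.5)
Each of the two characterizations will be proved by establishing its two implications separately.

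For (A), the direction ``locally finite $\Rightarrow$ finite image'' is immediate. For any $(\Sigma,\B)$-wta $\cA$, the set $\mathrm{wts}(\cA)$ is finite. By structural induction using \eqref{eq:delta-A-definition}, every entry $\h_\cA(t)_q$ lies in $\langle \mathrm{wts}(\cA)\rangle_{\{\oplus,\otimes,\0,\1\}}$, and hence so does $\initialsem{\cA}(t)$. Local finiteness makes this set finite. For the converse, suppose $\B$ is not locally finite, so some finite $A\subseteq B$ has $\langle A\rangle_{\{\oplus,\otimes,\0,\1\}}$ infinite. We are free to choose $\Sigma$, and we take one containing a binary symbol. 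Theorem~\ref{lm:closure-of-finite-set-i-recognizable} (or Corollary~\ref{lm:weak-loc-fin-not-loc-fin-monadic-is-weaker}) then yields a wta whose initial algebra semantics has image exactly this infinite set.

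For (B), the direction ``bi-locally finite $\Rightarrow$ finite run image'' works as follows. Let $W=\mathrm{wts}(\cA)$. Unfolding \eqref{equ:weight-of-run}, each $\wt_\cA(t,\rho)$ is a product, in a fixed order determined by $t$, of transition weights from $W$. This holds regardless of distributivity, since $\wt_\cA$ is defined run-by-run and involves only $\otimes$. Therefore $\wt_\cA(t,\rho)\otimes F_{\rho(\varepsilon)}$ lies in the multiplicative submonoid $M=\langle W\rangle_{\{\otimes,\1\}}$, which is finite by multiplicative local finiteness. Then $\runsem{\cA}(t)$ is a finite $\oplus$-sum of elements of $M$, so it lies in $\langle M\rangle_{\{\oplus,\0\}}$, which is finite by additive local finiteness.

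The converse of (B) is the step I expect to need real constructions. Suppose first that $\B$ is not multiplicatively locally finite, with finite $A$ such that $\langle A\rangle_{\{\otimes,\1\}}$ is infinite. Take a string-like ranked alphabet with a nullary $e$ and a unary $\gamma_a$ for each $a\in A$, and a one-state wta with $\delta_0(\varepsilon,e,q)=\1$, $\delta_1(q,\gamma_a,q)=a$ and $F_q=\1$. The run semantics of the tree $\gamma_{a_k}(\cdots\gamma_{a_1}(e))$ is $a_1\otimes\cdots\otimes a_k$, so the image is infinite. Suppose next that $\B$ is not additively locally finite, with finite $A=\{a_1,\dots,a_n\}$ and $\langle A\rangle_{\{\oplus,\0\}}$ infinite. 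Take $\Sigma=\{e^{(0)},\gamma^{(1)}\}$ and states $p_1,\dots,p_n$ together with a counting state $c$. Set $\delta_0(\varepsilon,e,p_i)=\1$ and $\delta_1(p_i,\gamma,p_i)=\1$, so a run stays in a single $p_i$. At the root, the run sits in $p_i$ with $F_{p_i}=a_i$. The difficulty is to realize arbitrary multiplicities $m_1a_1\oplus\cdots\oplus m_na_n$. I would handle this by giving each $p_i$ a family of runs whose number depends on the tree in a controllable way, using extra unary symbols $\gamma_i$ that ``duplicate'' runs only along the $p_i$-branch. The plan is to use states $p_i,p_i'$, both reachable from each other via $\gamma_i$ with weight $\1$, and $\0$ transitions for $\gamma_j$ with $j\neq i$, so that $\gamma_i$ doubles the $p_i$-runs. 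Doubling alone does not reach every count. To fix this, I would instead let runs choose the step at which they commit to $p_i$, which yields linearly many runs. Concretely, from a waiting state $w$ the symbol $\gamma_i$ either stays in $w$ or moves to the final state $p_i$ with weight $\1$, and $F_w=\0$. The number of runs ending in $p_i$ then equals the number of $\gamma_i$-occurrences, so the run semantics on words realizes every element $\bigoplus_i m_ia_i$, giving an infinite image. This final counting gadget is where I would double-check the transitions carefully.
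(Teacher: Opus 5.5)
Your proof is correct. For (A) you follow the route the paper supports. The forward direction is the introduction's remark that every value of $\initialsem{\cA}$ lies in $\langle \mathrm{wts}(\cA)\rangle_{\{\oplus,\otimes,\0,\1\}}$. The converse is Corollary~\ref{lm:weak-loc-fin-not-loc-fin-monadic-is-weaker}, i.e.\ Theorem~\ref{lm:closure-of-finite-set-i-recognizable} applied to a finite $A$ with infinite closure, on a $\Sigma$ you may choose. For (B) the paper gives no argument and only refers to the cited book. Your forward direction is the introduction's ``easy to see'' observation, and your string-automaton witnesses for the converse are a valid self-contained replacement.

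The additive gadget is left partly unspecified and clashes with your earlier drafts, so it needs pinning down. Use the following transitions:
\begin{itemize}
\item $\delta_0(\varepsilon,e,w)=\1$ and $\delta_0(\varepsilon,e,p_i)=\0$. If leaves could also enter $p_i$, every value would gain the fixed summand $\bigoplus_i a_i$, and the image can collapse, e.g.\ $\{1,\infty\}$ in $\mathbb{N}\cup\{\infty\}$ then only yields $\infty$.
\item $\delta_1(w,\gamma_j,w)=\delta_1(w,\gamma_j,p_j)=\1$.
\item $\delta_1(p_i,\gamma_j,p_i)=\1$ for \emph{all} $i,j$, not $\0$ for $j\neq i$.
\item All other transition weights $\0$, and root weights $F_w=\0$, $F_{p_i}=a_i$.
\end{itemize}
On $\gamma_{i_m}(\cdots\gamma_{i_1}(e))$ the runs using only $\1$-transitions are the run staying in $w$, plus, for each occurrence of some $\gamma_i$, one run switching to $p_i$ at that occurrence. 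Hence $\runsem{\cA}$ yields $\bigoplus_i m_i a_i$, where $m_i a_i$ is the $m_i$-fold sum and $m_i$ is the number of occurrences of $\gamma_i$. So the image is all of $\langle A\rangle_{\{\oplus,\0\}}$.

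You can also simplify this case. A finitely generated commutative monoid whose generators each have a finite cyclic submonoid is finite. So some single $a\in A$ has infinitely many distinct finite sums $a\oplus\cdots\oplus a$. Then one unary symbol $\gamma$, the two states $w,p$ and $F_p=a$ already suffice: on the tree with $m$ occurrences of $\gamma$ the run semantics is the $m$-fold sum of $a$.
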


The following question arises for weighted string automata (phrased in our present terminology):

Characterize the strong bimonoids  $\B$  such that
\begin{equation}\label{equ:*}
\text{for each string ranked alphabet $\Sigma$ and for each $(\Sigma,\B)$-wta $\cA$, the set $\im(\initialsem{\cA})$  is finite.}
\end{equation}

In \cite{drostuvog10} weakly locally finite strong bimonoids were defined
 (cf.  also \cite{drofultepvog24}). It is known that \eqref{equ:*} holds true if  $\B$  is weakly locally finite \cite[Lm.~18]{drostuvog10}, cf. \cite[Lm.~16.1.1]{fulvog26}; if $\B$ is right-distributive, then it is bi-locally finite iff it is weakly locally finite (cf. \cite[Rem.~17]{drostuvog10}). Moreover, \eqref{equ:*} implies that  $\B$  is bi-locally finite (but not conversely), see \cite[Lm.~12, Ex.~25]{drostuvog10}.
We can show that \eqref{equ:*} holds true if and only if all matrix magmas  $(B^{n \times n},\cdot)$ $(n \in \mathbb{N})$ of  $n \times n$-matrices over  $\B$  are locally finite with respect to multiplication from the right.
However, a characterization entirely in terms of the structure of  $\B$  is open.
We conjecture that \eqref{equ:*} does not imply that  $\B$  is weakly locally finite.

\nocite{*}
\bibliographystyle{eptcs}
\bibliography{afl2026.bib}

\end{document}